\newcommand{\mathfrc}[1]{\text{\fontfamily{frc}\footnotesize\selectfont#1}}
\def\curG{{\mathcal{G}}}
\def\curS{{\mathscr{S}}}
\def\curg{{\mathfrc{g}}}
\def\curs{{\mathfrc{s}}}
\def\curh{{\mathfrc{h}}}
\def\cure{{\mathfrc{e}}}
\def\curL{{\mathfrc{L}}}
\def\calL{{\mathcal{L}}}
\newcommand{\cH}{{\cal H}}
\def\unitaryset{{\mathcal{U}}}
\newcommand{\leavenode}[2]{#1\bullet\!\!\!\xrightarrow{#2}}
\def\th@plain{  \thm@notefont{}  \itshape}
\def\th@definition{ \thm@notefont{}  \normalfont}
\theoremstyle{definition}
\newtheorem{definition}{Definition}
\newtheorem{protocol}{Protocol}
\newtheorem{example}{Example}
\theoremstyle{theorem}
\newtheorem{theorem}{Theorem}
\newtheorem{lemma}{Lemma}
\theoremstyle{definition}
\newtheorem{remark}{Remark}
\DeclareMathOperator{\id}{\mathbb{1}} 
\def\C{{\mathbb{C}}}
\def\Z{{\mathbb{Z}}}
\def\N{{\mathbb{N}}}
\def\F{{\mathbb{F}}}
\newcommand{\tr}[0]{{\rm tr}}
\newcommand{\loc}{\ell} 
\newcommand{\Hamindex}{k} 
\newcommand{\Sqk}[1][q]{S_{#1}^{(k)}}
\newcommand{\rhot}{\rho^{\otimes \loc}} 
\newcommand{\dist}{\delta}
\renewcommand*{\leq}{\leqslant}
\renewcommand*{\geq}{\geqslant}
  \newcommand{\miniscule}{\@setfontsize\miniscule{9}{10}}%
\newcommand\SmallMatrix[1]{{%
  \tiny\arraycolsep=0.3\arraycolsep\ensuremath{\begin{pmatrix}#1\end{pmatrix}}}}
\newcommand\MiniMatrix[1]{{%
	\SmallMatrix{#1}}}
\begin{document}

\title{Improved bounded-strength decoupling schemes for local Hamiltonians\par}

\author{Adam D. Bookatz} 
\affiliation{Center for Theoretical Physics, Massachusetts Institute
of Technology, Cambridge, MA, U.S.A}

\author{Martin Roetteler}
\affiliation{Microsoft Research, Quantum Architectures and Computation Group, Redmond, WA, U.S.A.}

\author{Pawel Wocjan}
\affiliation{Department of Electrical Engineering 
and Computer Science, University of Central Florida, Orlando, FL, U.S.A \\} 


\begin{abstract}
We address the task of switching off the Hamiltonian of a system by removing all internal and system-environment couplings. We propose dynamical decoupling schemes, that use only bounded-strength controls, for quantum many-body systems with local system Hamiltonians and local environmental couplings. To do so, we introduce the combinatorial concept of balanced-cycle orthogonal arrays (BOAs) and show how to construct them from classical error-correcting codes. The derived decoupling schemes may be useful as a primitive for more complex schemes, e.g., for Hamiltonian simulation. For the case of $n$ qubits and a $2$-local Hamiltonian, the length of the resulting decoupling scheme scales as $O(n \log n)$, improving over the previously best-known schemes that scaled quadratically with $n$. More generally, using balanced-cycle orthogonal arrays constructed from families of BCH codes, we show that bounded-strength decoupling for any $\ell$-local Hamiltonian, where $\ell \geq 2$, can be achieved using decoupling schemes of length at most $O(n^{\ell-1} \log n)$. 
\end{abstract}

\pacs{{03.67.Lx, 03.65.Fd, 03.67.-a}}

\maketitle
\thispagestyle{fancy} 

\section{Introduction}
Consider a quantum system of $n$ interacting $d$-dimensional qudits with a time-independent (possibly unknown) Hamiltonian $H$ acting on a Hilbert space $\cH\cong (\C^d)^{\otimes n}$. 
We make the assumption that the system is $\loc$-local, i.e. that $H$ can be written as the sum of operators, each of which acts only on $\loc$ of the $n$ qudits. In nature it is usually the case that $\loc$ is small even when $n$ is large.
Without loss of generality, we also take $H$ to be traceless, and for technical reasons, we assume that $d$ is a prime power (which includes the important case of qubits, i.e. $d=2$).

We consider the task of \emph{decoupling}, i.e. effectively switching off the Hamiltonian $H$ (including removing any couplings to the environment) so that the system effectively evolves under the zero Hamiltonian. Such a task is important, for example, in the context of quantum memory, where one desires to preserve the state of a quantum system.

To achieve this task, we assume that the natural dynamics of the system can be modified by adjoining an
open-loop (non-feedback) controller according to
\begin{equation*}
H \mapsto H + H_c(t) \,.
\end{equation*}
In practice, physical limitations restrict the types of control Hamiltonians available for use. We consider the realistic setting in which $H_c(t)$ is only 1-local, i.e. due to our limited control of the system, $H_c$ is the sum of operators that each act on only one qudit.  We further impose the constraint that our control Hamiltonian $H_c(t)$ is limited to be bounded-strength, i.e. a sufficiently smooth bounded function.  This is in contrast to the setting of bang-bang control in which $H_c(t)$ can be a discontinuous function that takes values of arbitrarily large norm.
Our assumptions that the system Hamiltonian is an $\loc$-local Hamiltonian acting on a system of $n$ interacting qudits and that the control Hamiltonian is a $1$-local bounded-strength Hamiltonian reflect the typical composite nature of quantum systems and their coupling locality as well as the limitations in implementing external controls.

Viola and Knill proposed a general method for bounded-strength decoupling; see \cite{VK:02} and \cite[Chapter 4]{QECbook}.   Their method, often referred to as \textit{Eulerian decoupling}, relies on Eulerian cycles in Cayley graphs of a control group --- a certain finite group of control unitaries that can be implemented by switching on control Hamiltonians, from a finite set of available control operations, for a fixed time.  The Eulerian cycle dictates which control Hamiltonians are applied in the different time-slots of the decoupling protocol.    

The Eulerian method, as introduced in \cite{VK:02}, does not make it possible to directly leverage the fact that the system Hamiltonian is $\loc$-local in order to obtain more efficient decoupling schemes.  However, in the setting of bang-bang control there do exist efficient decoupling schemes that are specifically designed for composite quantum systems with $\loc$-local system Hamiltonians; see \cite{stollsteimer,finite,equivalence} and \cite[Chapter 15]{QECbook}. In these schemes, the specification of which bang-bang control unitaries are to be applied is chosen according to the entries of so-called \emph{orthogonal arrays of strength $\loc$}. They are matrices with the property that any submatrix formed by an arbitrary collection of $\loc$ rows satisfies a certain balancedness  condition.  

The work \cite{eulerOA} presented a particular construction of decoupling schemes merging the approaches of Eulerian (bounded-strength) decoupling together with orthogonal array (bang-bang) decoupling.  This construction yields schemes that require only bounded-strength controls and exploit the composite structure of the quantum system (namely, the locality of the system Hamiltonian) to achieve decoupling with fewer control operations.  To do so, these schemes introduce the concept of so-called \textit{Eulerian orthogonal arrays}.

The purpose of the present paper is to further improve upon the method of \cite{eulerOA} to obtain even more efficient bounded-strength decoupling schemes.  To this end, we first generalize the Eulerian method due to \cite{VK:02} by showing that it is also possible to achieve decoupling with the help of so-called \textit{balanced cycles}, which encompass Eulerian cycles as a special case.  We then show that bounded-strength decoupling of composite quantum systems with local Hamiltonians can be accomplished based on the new concept of \textit{balanced-cycle orthogonal arrays}.

Note that all the schemes discussed above can also be applied to the situation of a general open quantum system with joint Hamiltonian $H$ acting on a quantum system that is coupled to an uncontrollable environment. Such a Hamiltonian has the form
\begin{equation*}
H = H_S \otimes \id_B + \id_S \otimes H_B + \sum_\alpha S_\alpha
\otimes B_\alpha ,
\label{openHam}
\end{equation*} 
where the operators $H_S$ and $S_\alpha$ act on the system
and where the operators $H_B$ and $B_\alpha$ act on the environment. We assume that the system Hamiltonian $H_S$ and the operators $S_\alpha$ are all $\loc$-local. The decoupling goal in this case is to effectively switch off the system Hamiltonian $H_S$ and remove all couplings to the environment.
If, using controls that act only on the system, one can effectively switch off all generic system Hamiltonians, then such an operation will switch off $H_S$ and each $S_\alpha$, thereby accomplishing decoupling.\footnote{The remaining Hamiltonian term of $\id_S \otimes H_B$ is inconsequential, as it does not affect the system at all.} For notational simplicity, the remainder of the paper will therefore ignore the environment and treat only the case of effectively switching off an arbitrary $\loc$-local operator $H$.

%
\section{Description of the control-theoretic model}\label{Sec:control_model}
Consider the group $(\F_q,+)$, the additive group of the finite field of order $q=d^2$, where $d$ (the dimension of the qudits) is some prime power.
For the remainder of this paper, let $\rho :\F_q \rightarrow \unitaryset(d)$ be a faithful, irreducible, unitary, projective%
\footnote{Projective representations need only be homomorphisms up to phase, i.e. obey $U_{g + h} \propto U_g U_h$ with proportionality rather than equality.} %
representation 
that maps the elements of $\F_q$ to $d\times d$ unitary matrices, say $\rho : g \mapsto U_g$.
That $q$ cannot be smaller than $d^2$ for such a representation will be justified later in Remark~\ref{remark:sizes}; that $q=d^2$ suffices is justified by the explicit example shown below.

We assume that for every $g\in\F_q$ we can implement $U_g$ on any qudit of our system in the following sense: for every $g$, we can physically implement, over time $\delta\in[0,\Delta]$, a bounded-strength single-qudit Hamiltonian $h_g(\delta)$,
corresponding to a single-qudit unitary evolution operator $u_g(\delta)$, such that $U_g = u_g(\Delta)$ where $\Delta$ is some fixed length of time. We assume that we can do this on any qudit and, moreover, that we can do so for each of the $n$ qudits in parallel. Note that this assumption obeys the practical control limitations discussed earlier.

Of particular interest, in the case of qubits ($d=2, q=4$) we can consider the representation $\rho :\F_4 \rightarrow \unitaryset(2)$ that maps the four elements of $\F_4$ to the four $2\times 2$ Pauli matrices $\lbrace \id, X, Y, Z \rbrace$. Thus, it is assumed that we can physically implement any Pauli operator on any qubit. Rather than assuming that $q=4$, this paper will treat $q$ more generally; however the reader is invited to think of the special case of qubits if desired.
For non-qubits, with $q>4$, we can generalize this example as follows. For a prime $p$, define $\tilde X = \sum_{j=0}^{p-1} |j+1 \text{ mod } p\rangle\langle j|$ and $\tilde Z=\sum_{j=0}^{p-1} \omega^j |j\rangle\langle j|$, where $\omega$ is a $p^\text{th}$ root of unity. For prime $d=p$, the map $(a,b) \mapsto \tilde X^a \tilde Z^b$ defines a faithful, irreducible, unitary, projective representation from $\Z_d \times \Z_d$ to $\unitaryset(d)$. For a prime power $d=p^e$ (for some $e$), map $((a_1,b_1),\ldots,(a_e,b_e)) \mapsto \tilde X^{a_1} \tilde Z^{b_1} \otimes \cdots \otimes \tilde X^{a_e} \tilde Z^{b_e}$.

A decoupling protocol is defined by specifying a sequence of control Hamiltonians (equivalently, control unitaries) to be applied. As shown in Fig. \ref{fig:array}, we construct an $n\times N$ array with entries from $\F_q$, which we regard as a sequence of $N$ columns from $\F_q^n$. The $j$th column $\vec g_j = (g_{1j},\ldots,g_{nj})^T$ corresponds to the $j$th time interval $\big[(j-1)\Delta, j\Delta \big]$ of our protocol, during which we apply the control Hamiltonian 
\[
h_{\vec g_j}(\delta) = h_{g_{1j}}(\delta)\otimes\id\otimes\cdots\otimes\id \quad + \cdots + \quad \id\otimes\cdots\otimes\id\otimes h_{g_{nj}}(\delta)
\]
that gives rise to evolution $u_{\vec g_j}(\delta) = u_{g_{1j}}(\delta)\otimes\cdots\otimes u_{g_{nj}}(\delta)$ over $\delta\in[0,\Delta]$. In other words, 
 for each $\delta\in [0,\Delta]$ and $j=1,\ldots,N$,
$H_c(t) = h_{\vec g_j}(\delta)$ where $t = (j-1)\Delta + \delta$. The total time required to apply the entire sequence, i.e. the control cycle length, is therefore $T_c = N\Delta$, at which point the control sequence can be repeated. Observe that for any $t=(j-1)\Delta+\delta$, the unitary evolution $U_c(t)$ corresponding to the control Hamiltonian consequently satisfies $U_c(t) = u_{\vec g_j}(\delta) U_c\big((j-1)\Delta\big)$.
\begin{figure}
	\begin{center}
		\includegraphics[width=4in]{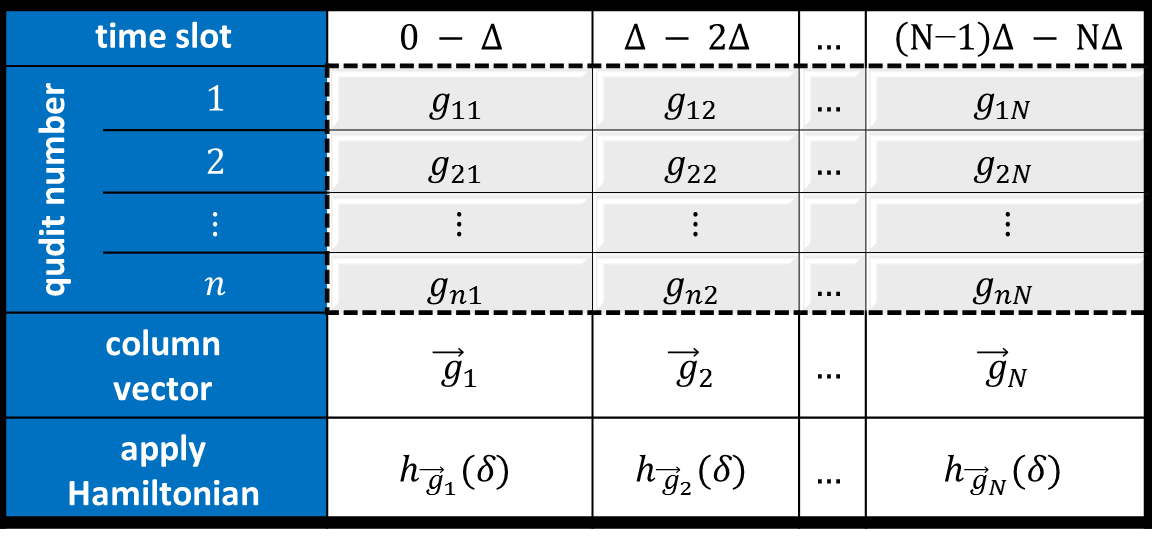}
		\caption
		{\label{fig:array}
		An $n\times N$ array, with each entry $g_{ij}\in\F_q$, shown within the dashed lines. Rows correspond to qudit numbers, columns to time slots (each of width $\Delta$). This array encapsulates the control sequence, with $H_c(t) = h_{\vec g_j}(\delta)$ over $\delta\in[0,\Delta)$ during the interval $t\in\big[(j-1)\Delta,j\Delta\big)$.
		}
	\end{center}
\end{figure}

According to average Hamiltonian theory \cite{EBW:87,WHH:68, Haeberlen76}, the resulting system evolution under $H+H_c(t)$ can be effectively approximated by 
\[
U(t) \approx e^{-i \bar H^{(0)} t}
\]
at times $t$ that are integer multiples of $T_c$, i.e. $t=mT_c$ for any $m\in\N$, where
\[
\bar H^{(0)} = \frac{1}{T_c} \int_{t=0}^{T_c} U_c(t)^\dag H U_c(t) dt
\]
is time-independent and where $U_c(t)$ is the time evolution due to $H_c(t)$ alone. The goal of decoupling, therefore, is to choose $U_c(t)$ such that $\bar H^{(0)} = 0$ for any $H$. It is in this sense that we effectively switch off the Hamiltonian $H$.
We refer the reader to \cite{eulerSim} for a detailed description of the above control-theoretic model and the resulting effective time-evolution.
We note, in particular, that although the approximation above is to leading order (in the Magnus expansion of $U_c(t)^\dag H U_c(t)$), the second-order term may be eliminated by designing the control Hamiltonian to satisfy $U_c(t)=U_c(T_c-t)$ \cite{Haeberlen76}.

The efficiency of the protocol developed in this paper is obtained by exploiting the composite structure of the Hamiltonian, namely the fact that $H$ was assumed to be a \textit{local} Hamiltonian. By definition, an $\loc$-local Hamiltonian $H$ on $n$ qudits can be written as $H = \sum_\Hamindex H_\Hamindex$, where each $H_\Hamindex$ acts non-trivially on at most $\loc$ of the $n$ qudits. In particular, the $\loc=2$ case corresponds to Hamiltonians with only pairwise interactions.
Our goal is to create a protocol that decouples each $H_\Hamindex$ simultaneously, and therefore decouples $H$. To see that this would work, observe that for any protocol $U_c(t)$,
\[
\bar H^{(0)} = \frac{1}{T_c} \int U_c(t)^\dag H U_c(t) dt = \sum_\Hamindex \frac{1}{T_c} \int U_c(t)^\dag H_\Hamindex U_c(t) dt = 
\sum_\Hamindex \bar H_\Hamindex^{(0)} .
\]

\section{Balanced cycles}
The success of the decoupling protocol introduced in this paper will rely on some basic group theory, which we introduce now.
Let $\curG$ be an Abelian group with a generating set $\curS\subset\curG$, i.e. any element of $\curG$ can be written as a sum of elements from $\curS$. 

\begin{definition}[Cayley graph]
The \textit{Cayley graph}, $\Gamma(\curG,\curS)$, of $\curG$ with respect to $\curS$ is a directed graph whose vertices are labeled by the group elements and whose edges are labeled by the generators. More precisely, there is a directed edge labeled $\curs$ from vertex $\curg\in\curG$ to vertex $\curh\in\curG$ iff $\curh = \curs+\curg$ for the generator $\curs\in\curS$.
\end{definition}

\begin{definition}[Cycle]
A \textit{cycle}, $\curL$, on $\Gamma(\curG,\curS)$ is a traversal on $\Gamma$ that starts and ends on the same vertex. We describe the cycle by the ordered list $\curL_\curG = \left(\curg_0,\ldots,\curg_{N-1}\right)$ of elements from $\curG$, indicating the order in which the elements are visited, with the understanding that the cycle visits $\curg_N = \curg_0$ immediately after visiting $\curg_{N-1}$. All the cycles in this paper visit every vertex at least once, so we assume without loss of generality that the first vertex is the identity element, $\cure$, of $\curG$. With this assumption we may equivalently represent the cycle $\curL_\curG$ by specifying the edges traversed, i.e. $\curL_\curS = (\curs_1,\ldots,\curs_N)$, where $\curg_{j}=\curs_{j}+\curg_{j-1}$ for $j=1,\ldots,N$; note that we differentiate between these representations by the subscript on $\curL$, but they both refer to the same cycle.
\end{definition}
Note that a cycle may visit vertices more than once and may traverse edges multiple times.
We will be interested not only in the vertices, but also the specific labels leaving each vertex; we denote by $\leavenode{\curg}{\curs}$ the $\curs$-labeled edge leaving vertex $\curg$.

\begin{definition}[Balanced cycle]
We say that $\curL$ is a \textit{balanced cycle} if 
$\forall \curs \in \curS, \exists \mu_\curs>0$ such that $\forall \curg\in\curG, \ \leavenode{\curg}{\curs}$ occurs exactly $\mu_\curs$ times; in other words, the cycle is balanced if it is balanced with respect to each label $\curs\in\curS$ in the sense that it leaves each $\curg$ via label $\curs$ an equal number of times (independent of $\curg$). Consequently, each $\curg$ will appear in $\curL$ precisely $\lambda = \sum_{\curs\in\curS} \mu_\curs$ times, independent of $\curg$. Because a Cayley graph is
a regular directed graph, it always has a balanced cycle whose length is then necessarily $N=\lambda |\curG|$.
\end{definition}

\begin{figure}[h]
	\begin{center}
	\subfigure[Eulerian cycle]{
		\includegraphics[width=2.5in]{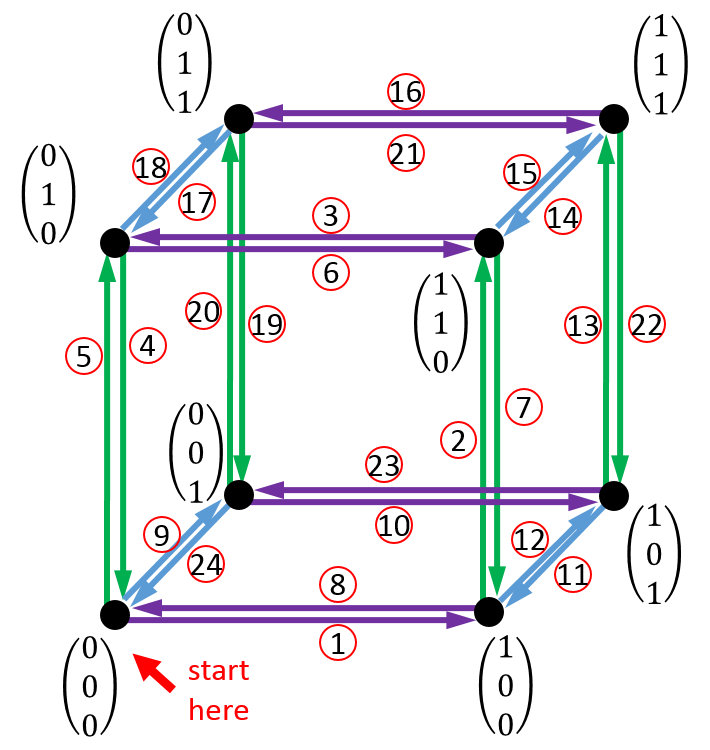}\label{fig:cycle_cube}
		}
	\subfigure[Balanced cycle]{
		\includegraphics[width=2.5in]{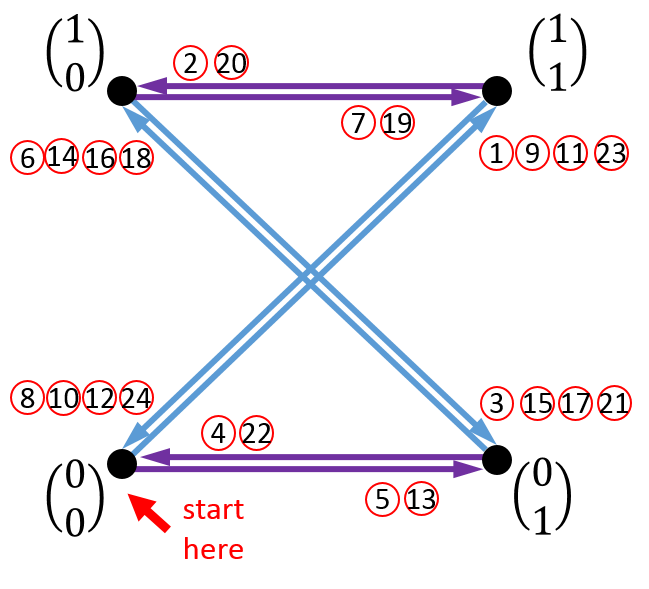}\label{fig:balanced_cycle_hourglass}
		}
		\caption[Examples of balanced cycles]
		{
		(a) An Eulerian cycle on the Cayley graph ${\small \Gamma\Big(\Z_2^3,\left\lbrace {\MiniMatrix{1\\0\\0}}, {\MiniMatrix{0\\1\\0}}, {\MiniMatrix{0\\0\\1}} \right\rbrace\Big)}$, i.e. a balanced cycle in which each edge label leaves each vertex precisely once. Vertices correspond to the eight elements of $\Z_2^3$. Edge labels correspond to the three generators, namely $\MiniMatrix{1\\0\\0}$ (purple), $\MiniMatrix{0\\1\\0}$ (green), and $\MiniMatrix{0\\0\\1}$ (blue). The cycle starts at $\MiniMatrix{0\\0\\0}$ and follows the path indicated (in ascending numerical order) by the circled integers (red).
		(b) A balanced cycle on the Cayley graph ${\small \Gamma\Big(\Z_2^2,\left\lbrace {\MiniMatrix{0\\1}}, {\MiniMatrix{1\\1}}\right\rbrace\Big)}$. Vertices correspond to the four elements of $\Z_2^2$. Edge labels correspond to the two generators, namely $\MiniMatrix{0\\1}$ (purple) and $\MiniMatrix{1\\1}$ (blue). The cycle starts at $\MiniMatrix{0\\0}$ and follows the path indicated (in ascending numerical order) by the circled integers (red). Observe that the cycle is indeed balanced: for each of the two edge labels, the edges leave each vertex the same number of times, irrespective of vertex. Specifically, the $\MiniMatrix{0\\1}$ label leaves each vertex precisely $\mu_{\MiniMatrix{0\\1}} = 2$ times, while the $\MiniMatrix{1\\1}$ label leaves each vertex precisely $\mu_{\MiniMatrix{1\\1}} = 4$ times.
		}
	\end{center}
\end{figure}
An important special case of a balanced cycle is an \textit{Eulerian cycle} on $\Gamma(\curG,\curS)$, for which $\mu_\curs=1$ for every $\curs\in\curS$. 
Examples of an Eulerian cycle and a non-Eulerian balanced cycle are shown in Fig.~\ref{fig:cycle_cube} and Fig.~\ref{fig:balanced_cycle_hourglass} respectively.
In \cite{VK:02}, Eulerian cycles were used to define decoupling protocols that avoided the discontinuous nature of bang-bang decoupling. More generally, one can define decoupling protocols based on balanced cycles (of which Eulerian decoupling is a special case), to which we now turn our attention.
Note, however, that this balanced-cycle decoupling protocol will not be the goal of this paper. Indeed, such a protocol will not exploit the composite structure of the Hamiltonian. Later we will utilize the balanced-cycle decoupling on $\loc$-qudit subsystems of a larger $n$ qudit space to develop more efficient protocols; in the current section, however, we may regard $\loc$ as the size of the entire system.

In exploiting the $\loc$-local nature of $H$, we will find that we are primarily interested in the group
\[
\curG = \F_q^\loc = \lbrace (a_1,\ldots,a_\loc)^T : a_i\in \F_q \rbrace
\]
with some generating set 
$\curS$
and the representation
\[
\rho^{\otimes \loc} : \curG \rightarrow \unitaryset(d^\loc)
\]
defined from our representation $\rho :\F_q \rightarrow \unitaryset(d)$. Specifically, if $\curg = (a_1,\ldots,a_\loc)^T \in\curG$ and $\rho(a_i)=U_{a_i}$ then $\rho^{\otimes \loc}(\curg) = U_\curg = U_{a_1}\otimes\cdots\otimes U_{a_\loc}$.
By our assumptions above, we can physically implement $U_\curg$ by applying the control unitary $u_\curg(\delta)$ (equivalently, the control Hamiltonian $h_\curg(\delta)$) for time $\Delta$.
For example, in the case of qubits ($q=4$), the group $\F_4$, whose elements we denote%
\footnote{Here the reader may prefer to equivalently think of the Abelian group as $\lbrace 1,z, x, y=xz=zx \rbrace$ with generating set $S_4 = \lbrace z,x \rbrace$. Then we can use $\curS = \lbrace z^{(1)}, x^{(1)}, \ldots z^{(\loc)}, x^{(\loc)} \rbrace$ and $\rho(x)=X, \rho(y)=Y,$ and $\rho(z)=Z$. Be aware, however, that the group operation used throughout the paper is denoted by $+$ rather than by multiplication, since it is inherited from the finite field.} %
as $\lbrace 0, 1, \alpha, \alpha+1 \rbrace$, is generated by the set $S_4 = \lbrace 1, \alpha \rbrace$. We choose $\curS = \lbrace 1^1, \alpha^1, \ldots 1^\loc, \alpha^\loc \rbrace$, which is a generating set of $2\loc$ elements for the group $\curG = \F_4^\loc$, where $x^i$ here denotes the column $(0,\ldots,0,x,0,\ldots,0)^T$ with $x\in\F_q$ in the $i$th position. In this case we assume $\rho(a_i)=U_{a_i}$ is a Pauli matrix, so $\rho^{\otimes \loc}(\curg)$ is a tensor product of Pauli matrices.

The purpose for the group theory used in this paper resides in the following observation \cite[Chapter 4]{QECbook}. 
We define the operator $\Pi_\curG$ to act on matrices $A$ as
\begin{equation}\label{eqn:Pi}
\Pi_\curG(A) = \frac{1}{|\curG|} \sum_{\curg\in\curG} U_\curg^\dag A U_\curg \,.
\end{equation}
Note that for every matrix $A$, $\Pi_\curG(A)$ commutes with all $U_\curg$ ($\curg\in\curG$). Thus, by Schur's lemma, since $\rho$ is irreducible%
\footnote{%
Schur's lemma guarantees this directly when $\loc=1$. But then it also applies for $\loc=2$ since 
then for any matrix $A=\sum_{i} B_i\otimes C_i$, we have $\Pi_\curG(A) = \frac{1}{|\curG|} \sum_{i} \sum_{a_1,a_2\in\F_q} U_{a_1}^\dag B_i U_{a_1} \otimes U_{a_2}^\dag C_i U_{a_2} \propto \sum_{i} \tr{B_i} \tr{C_i}= \tr{\sum_{i} B_i \otimes C_i} = \tr{A}$, and similarly for larger $\loc$.
}, we have $\Pi_\curG(A) = \frac{\tr(A)}{D}\id$ (where $D$ is the dimension of the Hilbert space). In particular then, if $\tr(A)=0$ then $\Pi_\curG(A)=0$.

\begin{protocol}[Bounded-strength balanced-cycle decoupling] \label{prot:inefficient-bounded-balanced}
Let $\curL$ be a balanced cycle on $\Gamma(\curG,\curS)$ of length $N = |\curG|\sum_\curs\mu_\curs = \lambda|\curG|$,  with group element representation $\curL_\curG = \left(\curg_0,\ldots,\curg_{N-1}\right)$ and generator representation $\curL_\curS = (\curs_1,\ldots,\curs_N)$.
For $j=1,\ldots,N$, set $U_c(0)=U_\cure=\id$ and 
\[
U_c\Big((j-1)\Delta+\delta\Big) = u_{\curs_j}\!(\delta) \ U_c\Big((j-1)\Delta\Big), \qquad \delta\in[0,\Delta].
\]
Note that because%
\footnote{up to phase, since $\rho$ is a projective representation; since we will only ever conjugate by $U_c$, the overall phase is irrelevant and we shall simply ignore it.} 
 $U_{\curs_j} U_{\curg_{j-1}} = U_{\curs_j + \curg_{j-1}} = U_{\curg_j}$, this implies 
$U_c(j\Delta)=U_{\curg_j}$ (for $j=0,\ldots,N$), i.e.
\begin{equation}\label{eqn:Uc_breakdown}
U_c\Big((j-1)\Delta+\delta\Big) = u_{\curs_j}\!(\delta) \ U_{\curg_{j-1}}, \qquad \delta\in[0,\Delta].
\end{equation}
The control cycle length is thus $T_c = N\Delta = |\curG|\lambda\Delta$.
\end{protocol}

\begin{theorem} \label{theorem:inefficient-bounded-balanced}
The above balanced-cycle protocol performs bounded-strength decoupling.
\end{theorem}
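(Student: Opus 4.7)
\medskip
\noindent\textbf{Proof plan.} The plan is to compute $\bar H^{(0)}$ directly from its definition and show that it vanishes for every traceless $H$, by exploiting the balanced-cycle hypothesis to arrange the conjugating unitaries into a form where Schur's lemma (as discussed just before the protocol) can be applied.

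First, I split the integral defining $\bar H^{(0)}$ into the $N$ subintervals of width $\Delta$ dictated by the control sequence. On the $j$th subinterval, equation \eqref{eqn:Uc_breakdown} gives $U_c\bigl((j-1)\Delta+\delta\bigr)=u_{\curs_j}(\delta)\,U_{\curg_{j-1}}$, so
\[
\bar H^{(0)} \;=\; \frac{1}{N\Delta}\sum_{j=1}^{N}\int_0^{\Delta} U_{\curg_{j-1}}^\dag\,u_{\curs_j}(\delta)^\dag\,H\,u_{\curs_j}(\delta)\,U_{\curg_{j-1}}\,d\delta .
\]
Next, I reorganize the outer sum by grouping terms according to the generator label $\curs_j$. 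For each fixed $\curs\in\curS$, the balanced-cycle hypothesis says that the multiset of starting vertices $\{\curg_{j-1}:\curs_j=\curs\}$ contains each element of $\curG$ exactly $\mu_\curs$ times (this is precisely the $\leavenode{\curg}{\curs}$ count). Hence
\[
\bar H^{(0)} \;=\; \frac{1}{N\Delta}\sum_{\curs\in\curS}\mu_\curs\int_0^{\Delta}\sum_{\curg\in\curG} U_{\curg}^\dag\,u_{\curs}(\delta)^\dag\,H\,u_{\curs}(\delta)\,U_{\curg}\,d\delta .
\]

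Now I recognize the innermost sum as $|\curG|$ times the projector $\Pi_\curG$ defined in \eqref{eqn:Pi}, applied to the operator $A(\delta)=u_{\curs}(\delta)^\dag H u_{\curs}(\delta)$. Because cyclic invariance of the trace gives $\tr A(\delta)=\tr H=0$ for any traceless $H$, the earlier Schur-lemma consequence $\Pi_\curG(A)=\tfrac{\tr(A)}{D}\id$ forces $\Pi_\curG\bigl(A(\delta)\bigr)=0$ pointwise in $\delta$. Therefore every term in the outer sum vanishes and $\bar H^{(0)}=0$, which is the decoupling condition.

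The only real bookkeeping step (and the place where the balanced-cycle structure is used) is the regrouping of the $j$-sum by generator; once that is in place the argument reduces to the standard averaging-over-the-group fact already invoked in the text. One might worry about the projective nature of $\rho$, but the overall phases cancel upon taking $U_{\curg}^\dag(\cdot)U_{\curg}$, as noted in the protocol's footnote, so this causes no difficulty. A minor remark: the same computation with $H$ replaced by $H_\Hamindex$ shows that each $\bar H_\Hamindex^{(0)}$ would independently vanish provided $\Pi_\curG$ annihilates every traceless operator on the relevant $\loc$-qudit support, which is exactly the motivation for the local-decoupling refinement pursued in later sections.
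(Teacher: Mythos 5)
Your proposal is correct and follows essentially the same route as the paper's own proof: both split the time average over the $N$ subintervals, use the balanced-cycle count $\mu_\curs$ to regroup the terms, and then invoke the Schur-lemma fact that $\Pi_\curG$ annihilates traceless operators (the paper packages the $\curs$-sum into an auxiliary trace-preserving map $F_\curS$ and applies $\Pi_\curG$ once at the end, whereas you apply $\Pi_\curG$ pointwise in $\delta$ inside the integral, but these are equivalent by linearity). No gaps.
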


\begin{proof}
Since $\curL$ is a balanced cycle, $\leavenode{\curg}{\curs}$ occurs exactly $\mu_\curs$ times for every $\curg,\curs$ pair. Thus $u_\curs(\delta) U_\curg$ appears exactly $\mu_\curs$ times in the protocol for each $\curs, \curg$, and so we have, for any traceless $d^\loc \times d^\loc$ Hamiltonian $H$,
\begin{eqnarray*}
\bar H^{(0)} 
&=& \frac{1}{T_c} \int_{t=0}^{T_c} U_c(t)^\dag H U_c(t) dt \\
&=& \frac{1}{T_c}\sum_\curg U_\curg^\dag \left[\sum_\curs \mu_\curs\int_{\delta=0}^\Delta u_\curs(\delta)^\dag H u_\curs(\delta) d\delta \right] U_\curg \\
&=& \Pi_\curG\Big(F_\curS(H)\Big)
\end{eqnarray*}
where $\Pi_\curG$ is defined in Eq.~(\ref{eqn:Pi}) and $F_\curS$ is defined by
\begin{equation}\label{eqn:F}
F_\curS(H) = \sum_\curs \frac{\mu_\curs}{\lambda\Delta}\int_{\delta=0}^\Delta u_\curs(\delta)^\dag H u_\curs(\delta) d\delta\,.
\end{equation}
Recall that $\Pi_\curG$ suppresses traceless matrices. Assuming that $H$ is traceless, and observing that $F_\curS$ is trace-preserving, we have that
$\Pi_\curG\Big(F_\curS(H)\Big)=0$. We conclude that $\bar H^{(0)}=0$, i.e. the protocol succeeds at decoupling.
\end{proof}

\begin{remark} \label{remark:reducible}
For simplicity, we have assumed that $\rho$ is irreducible. Then this protocol works for any traceless time-independent $H$, even if $H$ is unknown. It is possible to define protocols in which $\rho$ is not irreducible, in which case $\Pi_\curG$ need not suppress all traceless matrices. However, in such a case, one must take special care to ensure that $\Pi_\curG$ still suppresses $F_\curS(H)$ for the Hamiltonians of interest. See \cite{eulerSim} for examples in a similar context, as well as Example~\ref{ex:diagonal} later in this paper.
\end{remark}

\begin{remark}\label{remark:sizes}
Although Protocol~\ref{prot:inefficient-bounded-balanced} performs bounded-strength decoupling, it would generally not be an efficient protocol were it applied to the entire system (i.e. if $\loc$ were the number of qudits of the entire system). Assuming that $\rho$ is irreducible, the representation 
$\rhot:\curG\rightarrow\unitaryset(d^\loc)$ necessitates that $|\curG|$, and therefore $T_c$, are exponential in $\loc$.
%
Indeed, suppose we have a representation from $\curG$ to $\unitaryset(D)$ such that for any $D\times D$ matrix $A$, $\Pi_\curG(A) = \frac{\tr(A)}{D}\id_D$ as we used in Theorem~\ref{theorem:inefficient-bounded-balanced}.
Consider sending the bipartite entangled state $|\psi\rangle = \frac{1}{\sqrt{D}}\sum_{j=1}^D |j\rangle \otimes |j\rangle$, or more precisely, $\Psi=|\psi\rangle\langle\psi|$, through the channel ${\cal I}\otimes\Pi_\curG$ (where ${\cal I}$ is the identity channel on a $D$-dimensional space) obtaining
$$\sum_{\curg\in\curG} \frac{1}{|\curG|} (\id_D\otimes U_\curg^\dag) \Psi (\id_D\otimes U_\curg)
=
({\cal I}\otimes\Pi_\curG)(\Psi) = \frac{1}{D^2}\id_D\otimes\id_D = \frac{1}{D^2}\id_{D^2} \,.$$
The matrix rank of the right-hand side is $D^2$. Using the fact that 
$\text{rank}(A+B) \leq \text{rank}(A) + \text{rank}(B)$ 
and that for each $\curg$, 
$\text{rank}(\frac{1}{|\curG|}(\id\otimes U_\curg^\dag) \Psi (\id\otimes U_\curg)) = \text{rank}(\Psi) = 1$, the rank of the left-hand side is at most $|\curG|$; thus,
$ |\curG| \geq  D^2. $
Therefore, for the representation $\rhot:\curG\rightarrow\unitaryset(d^\loc)$ to succeed in the proof of Theorem~\ref{theorem:inefficient-bounded-balanced}, we require that $|\curG| \geq d^{2\loc}$, which is exponential in $\loc$. 
Incidentally, by considering the case of $\loc=1$, we have justified why we could not have chosen $q$ less than $d^2$ in our irreducible representation $\rho :\F_q \rightarrow \unitaryset(d)$.
\end{remark}

Observe that the key to this protocol working is the fact that each $u_\curs(\delta)U_\curg$ shows up an equal number of times, independent of $\curg$, i.e. 
$\forall \curs\in\curS \ \exists \mu_s>0$ such that $\forall \curg\in\curG, \ \leavenode{\curg}{\curs}$ occurs $\mu_\curs$ times (independent of $\curg$). In an Eulerian cycle, $\mu_\curs=1$ for every $\curs$, which is certainly sufficient. All else being equal, given the choice between Eulerian and other balanced cycles, we would choose Eulerian cycles as they will minimize $N$ and therefore $T_c$. However, we will see that when considering the composite properties of a system (specifically that interactions are local), we will be able to exploit the notion of balanced cycles to come up with a much more efficient protocol.

\section{Balanced-cycle Orthogonal Arrays}

In Section~\ref{Sec:control_model} and Fig.~\ref{fig:array}, we indicated how we view our decoupling scheme as an array. For the protocol to be efficient, we shall ensure that this array corresponds to what we call a \emph{balanced-cycle orthogonal array} (BOA). A BOA is a special type of orthogonal array (OA), which we first define. We refer the reader to \cite{HSS99} for a thorough introduction to OAs, particularly their relationship to linear codes (of which we shall later make use).

For notational consistency, we point out that throughout the remainder of this paper we adopt the notation that $\curG$ and $\curS$ refer specifically to the group $\F_q^\loc$ and a generating set for $\F_q^\loc$, respectively. Elements of $\curG$ will be denoted using script $\curg$, elements of $\curS$ will be denoted using script $\curs$, and cycles on $\curG$ will be denoted $\curL$. When other groups (such as $\F_q$ or $\F_q^n$) are being considered, other notation (such as $g$, $m$, $S$, $s$ and $\calL$) will be used instead.

\begin{definition}[Orthogonal array]
An $OA_\lambda(N,n,q,\loc)$ \textit{orthogonal array} on the alphabet $\F_q$ is an $n\times N$ array where each of the $N$ columns is a vector from $\F_q^n$ such that every $\loc\times N$ subarray (obtained by only considering a selection of just $\loc$ of the $n$ rows) contains each possible $\loc$-tuple of elements of $\F_q$ (i.e. contains each $c\in\F_q^\loc$) precisely $\lambda$ times as a column. The number $\loc$ is called the \textit{strength} of the OA. 
\end{definition}

\begin{remark}
To relate these numbers to those appearing elsewhere in this paper, \begin{itemize}
\item $N$ will correspond to number of steps in the decoupling protocol (i.e. the length of our balanced cycle),
\item $n$ will correspond to the number of $d$-dimensional qudits describing the system, 
\item $q=d^2$ (e.g. for qubits, $d=2$ and $q=4$), 
\item $\loc$ is the locality of the Hamiltonian (e.g. for pairwise interactions, $\loc=2$), and
\item $\lambda = N/q^\loc$ will be the same $\lambda$ as in our discussion of balanced cycles, $\lambda=\sum_\curs \mu_\curs$.
\end{itemize}
\end{remark}

\begin{remark}\label{remark:OA_order}
Note that the order of the columns in the OA is irrelevant to whether the array is an OA. Moreover, if $A = [\vec a_i]$ is an $OA_\lambda(N,n,q,\loc)$ with columns $\vec a_1, \ldots, \vec a_N$ then the matrix $A'$, whose columns consist of precisely $r$ copies of each $\vec a_i$ (in any order), is an $OA_{r \lambda}(r N,n,q,\loc)$. Note, however, that while the order of the columns does not affect the OA property of the array, when defining balanced-cycle orthogonal arrays (which we do next), we will be highly concerned with the order of the columns in the array.
\end{remark}

\begin{figure}
	\begin{center}
		\includegraphics[width=2in]{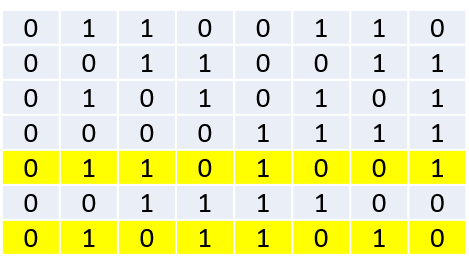}
		\caption[Example of a OA] 
		{\label{fig:OA}
		Example of a $OA_2(8,7,2,2)$, i.e. a $OA_\lambda(N,n,q,\loc)$ with $N=8$ columns and $n=7$ rows on the finite field $\F_q = \Z_2$ of order $q=2$. Any subarray defined by any $\loc=2$ rows contains each 2-tuple precisely $\lambda=2$ times. For example, rows 5 and 7 (highlighted) form a $2\times8$ subarray in which $\SmallMatrix{0\\0}, \SmallMatrix{0\\1}, \SmallMatrix{1\\0}, $ and $\SmallMatrix{1\\1}$ each occur precisely twice. Note that typically in this paper, $q=d^2$ (for example, for qubits $q=4$), but for simplicity, the example in this figure uses $q=2$.
		}
	\end{center}
\end{figure}
An example of an $OA_2(8,7,2,2)$ is shown in Fig.~\ref{fig:OA}. Orthogonal arrays have been used to construct bang-bang decoupling schemes (see \cite{stollsteimer,finite,equivalence} and \cite[Chapter 15]{QECbook}). In order to construct a bounded-strength scheme, we introduce the notion of a balanced-cycle orthogonal array, defined as follows.
\begin{definition}[Balanced-cycle orthogonal array]
A $BOA(N,n,q,\loc)$ \textit{balanced-cycle orthogonal array} on the alphabet $\F_q$ is an $n\times N$ array, $A$, where each of the $N$ columns is a vector from $\F_q^n$ such that every $\loc\times N$ subarray (obtained by only considering a selection of just $\loc$ of the $n$ rows) defines a balanced cycle on the Cayley graph of $\curG=\F_q^\loc$ with respect to some generating set for $\curG$ (which may depend on the subarray). Specifically, if the entries of $A$ are denoted $a_{ij}$ (with $1 \leq i \leq n$ and $0 \leq j \leq N-1$), then for every choice of $\loc$ distinct integers $i_1,\ldots,i_\loc\in\lbrace 1,\ldots, n\rbrace$, there is a generating set $\curS$ for $\curG$ (which may, in general, depend on $i_1,\ldots,i_\loc$) such that if $\curg_j = (a_{i_1 j},\ldots, a_{i_\loc j})^T$ denotes the $j$th column of $A$ restricted to rows $i_1,\ldots,i_\loc$, then $\curL_\curG = \left(\curg_0,\ldots,\curg_{N-1}\right)$ defines a balanced cycle on $\Gamma(\curG,\curS)$.
\end{definition}

\begin{figure}
	\begin{center}
		\includegraphics[width=5.5in]{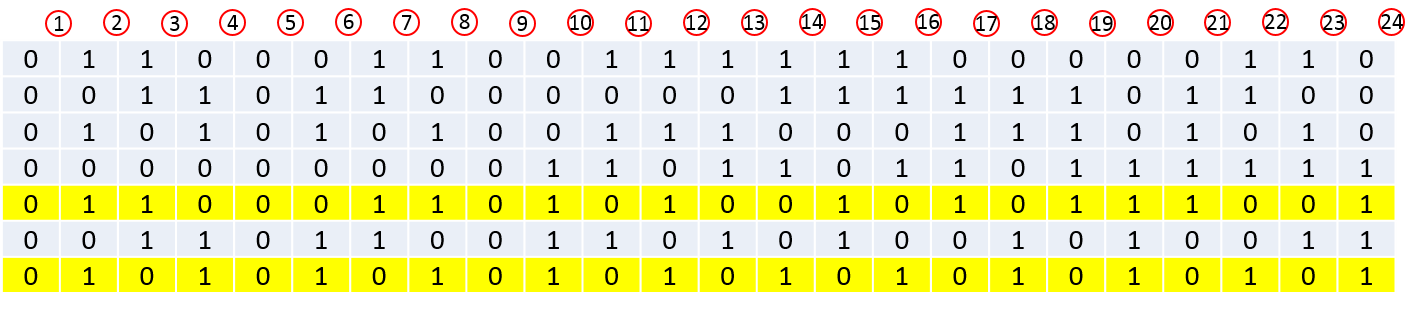}
		\caption[Example of a BOA]
		{\label{fig:BOA_highlight}
		Example of a $BOA(24,7,2,2)$, i.e. a $BOA(N,n,q,\loc)$ with $N=24$ columns and $n=7$ rows on the finite field $\F_q = \Z_2$ of order $q=2$. Any subarray defined by any $\loc=2$ rows defines a balanced cycle on the Cayley graph $\Gamma(\curG,\curS)$ of $\curG=\F_q^\loc = \Z_2^2$ with respect to some generating set $\curS$ (which may depend on the subarray). For example, rows 5 and 7 (highlighted) form a $2\times24$ subarray that defines the balanced cycle shown in Fig.~\ref{fig:balanced_cycle_hourglass}. The circled integers (red) correspond to the steps taken by the balanced cycle as shown in that figure. Note that typically in this paper, $q=d^2$ (for example, for qubits $q=4$), but for simplicity, the BOA example shown here uses $q=2$. The method by which this BOA was constructed is detailed in Example~\ref{ex:pedagogical} of Sec.~\ref{sec:examples}.
		}
	\end{center}
\end{figure}
An example of a BOA is shown in Fig.~\ref{fig:BOA_highlight}. We defer the proof that BOAs exist to Sec.~\ref{sec:BOA_exist}. The remainder of the current section defines a decoupling protocol based on BOAs and proves that it works to decouple $\loc$-local Hamiltonians in $n$ qudit systems ($\loc\leq n$). 
Working with $\loc$ qudits (rather than $n$ qudits), along with the promise that $H$ is $\loc$-local, will enable us to give an efficient protocol.

\begin{protocol}[Efficient, bounded-strength balanced-cycle decoupling based on BOAs] \label{prot:BOA}
Let $A = [\vec a_j]_{j=0,\ldots,N-1}$ be a $BOA(N,n,q,\loc)$ whose columns are denoted by the vectors $\vec a_j = (a_{1 j},\ldots, a_{n j})^T$, where $a_{ij}\in\F_q$ is the $(i,j)$ entry of $A$. For $j=1,\ldots,N$, let $\vec b_j = \vec a_j - \vec a_{j-1}$ be the transitions between the columns, treating ${\vec a_N} = {\vec a_0} = 0$.

For $j=1,\ldots,N$, set $U_c(0)=\id$ and 
\[
U_c\Big((j-1)\Delta+\delta\Big) = u_{\vec b_j}(\delta) U_c\Big((j-1)\Delta\Big), \qquad \delta\in[0,\Delta];
\] note that this implies that $U_c(j\Delta)=U_{\vec a_j}$ (for $j=0,\ldots,N$). 
The control cycle length is thus $T_c = N\Delta$.
\end{protocol}

\begin{theorem} \label{theorem:BOA}
The above protocol performs bounded-strength decoupling.
\end{theorem}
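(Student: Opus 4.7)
The plan is to reduce the BOA protocol to a repeated application of Theorem~\ref{theorem:inefficient-bounded-balanced} on each $\loc$-qudit subsystem appearing in the locality decomposition of $H$. First I would write $H = \sum_\Hamindex H_\Hamindex$ where each $H_\Hamindex$ acts non-trivially on at most $\loc$ qudits and, by expanding in a tensor-product basis of single-qudit operators that includes the identity, assume without loss of generality that each $H_\Hamindex$ is traceless on its support; the only identity-proportional remainder would collect into a single multiple of $\id$, whose coefficient vanishes because $\tr(H) = 0$. Since the averaging map $H \mapsto \bar H^{(0)}$ is linear, it then suffices to show $\bar H_\Hamindex^{(0)} = 0$ for each summand separately.

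Next, I would fix such an $H_\Hamindex$ supported on qudits $i_1, \ldots, i_\loc$. The key observation is that the control unitary $U_c(t)$ of Protocol~\ref{prot:BOA} is a tensor product across all $n$ qudits, while $H_\Hamindex$ acts as the identity outside its support, so the conjugation $U_c(t)^\dag H_\Hamindex U_c(t)$ depends only on the tensor factors of $U_c(t)$ at positions $i_1, \ldots, i_\loc$. Writing $\curg_{j-1} = (a_{i_1,j-1}, \ldots, a_{i_\loc, j-1})^T$ and $\curs_j = \curg_j - \curg_{j-1}$, the restricted evolution at time $t = (j-1)\Delta + \delta$ takes the form $u_{\curs_j}(\delta)\, U_{\curg_{j-1}}$, which is exactly the pattern of Eq.~\eqref{eqn:Uc_breakdown}.

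The BOA hypothesis then supplies precisely what is needed: it guarantees that the restricted column sequence $(\curg_0, \ldots, \curg_{N-1})$ is a balanced cycle on $\Gamma(\curG, \curS)$ for some generating set $\curS$ of $\curG = \F_q^\loc$ (with $\curS$ allowed to depend on the chosen rows). Consequently, the restricted control on the $\loc$-qudit subsystem carrying $H_\Hamindex$ is a bona fide instance of Protocol~\ref{prot:inefficient-bounded-balanced}, and Theorem~\ref{theorem:inefficient-bounded-balanced} applied to the traceless operator $H_\Hamindex$ yields $\bar H_\Hamindex^{(0)} = 0$. Summing over $\Hamindex$ then gives $\bar H^{(0)} = 0$.

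The main obstacle I anticipate is more bookkeeping than conceptual: one must cleanly preserve the time-dependence of $u_{\vec b_j}(\delta)$ under the restriction to the $\loc$ support qudits, make precise that the tensor-product factorization of $U_c(t)$ really does let the non-support factors drop out of the conjugation, and confirm that the trace-free decomposition of $H$ exhausts all terms up to an identity that is forced to vanish. Once these are set up correctly, the proof reduces to the two slogans ``restrict $U_c(t)$ to the support of $H_\Hamindex$'' and ``invoke Theorem~\ref{theorem:inefficient-bounded-balanced} on that support,'' with the BOA definition furnishing the balanced-cycle structure required for the second step.
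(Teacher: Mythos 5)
Your proposal is correct and follows essentially the same route as the paper: restrict $U_c(t)$ to the support of each local term $H_\Hamindex$, recognize the restricted evolution as an instance of Protocol~\ref{prot:inefficient-bounded-balanced} via the balanced cycle guaranteed by the BOA definition, and invoke Theorem~\ref{theorem:inefficient-bounded-balanced} term by term. Your explicit handling of the tracelessness of each $H_\Hamindex$ (absorbing any identity-proportional part, which vanishes since $\tr(H)=0$) is a minor bookkeeping point the paper leaves implicit, but it does not change the argument.
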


\begin{proof}

$H$ is an $\loc$-local Hamiltonian, $H = \sum_\Hamindex H_\Hamindex$ with each $H_\Hamindex$ acting non-trivially on at most $\loc$ qudits.
Consider a term $H_\Hamindex$, which acts non-trivially only on qudits denoted $i_1,\ldots,i_\loc$ and write $H_\Hamindex = h_\Hamindex \otimes \id_{n-\loc}$, where $h_\Hamindex$ is understood to be a $d^\loc\times d^\loc$ matrix acting only on these $\loc$ qudits and $\id_{n-\loc}$ is the identity matrix on the other $n-\loc$ qudits. 
By definition of a BOA, the $\loc\times N$ subarray of $A$ restricted to rows $i_1,\ldots,i_\loc$ defines a balanced cycle $\curL$ on $\Gamma(\curG,\curS)$ where $\curS$ is some generating set of $\curG=\F_q^\loc$. 
The idea of the proof is to observe that the protocol involving the columns $\vec a_j$ for decoupling $H_k$ is equivalent to a protocol involving the subarray's columns for decoupling $h_k$; since the subarray defines a balanced cycle, we can then invoke Protocol~\ref{prot:inefficient-bounded-balanced} to successfully decouple $h_k$ and therefore $H_k$.

Let $\curg_j = (a_{i_1 j},\ldots, a_{i_\loc j})^T$ denote the $j$th column of $A$ restricted to rows $i_1,\ldots,i_\loc$ and let $\curs_j = \curg_j - \curg_{j-1} = (b_{i_1 j},\ldots, b_{i_\loc j})^T$, where $b_{ij}$ is the $i$th entry of $\vec b_j$. Then the cycle $\curL$ is represented as $\curL_\curG = \left(\curg_0,\ldots,\curg_{N-1}\right)$ and $\curL_\curS = (\curs_1,\ldots,\curs_{N})$.

As in the proof of Theorem~\ref{theorem:inefficient-bounded-balanced}, we are interested in $U_c(t)^\dag H_\Hamindex U_c(t)$. The control unitary at time $t=(j-1)\Delta+\delta$ is
\[
U_c\Big((j-1)\Delta+\delta\Big) 
= u_{\vec b_j}\!(\delta) \, U_c\Big((j-1)\Delta\Big)
= u_{\vec b_j}\!(\delta) \, U_{\vec a_{j-1}}
= \left( u_{b_{1j}}(\delta) \otimes\cdots\otimes u_{b_{nj}}(\delta) \right)
	\left( U_{a_{1 (j-1)}} \otimes\cdots\otimes U_{a_{n (j-1)}} \right) \,.
\]
Thus, when conjugating $H_\Hamindex = h_\Hamindex \otimes \id_{n-\loc}$ by $U_c\big((j-1)\Delta+\delta\big)$, all of the unitaries not acting on the $\loc$-qudit subspace of $h_\Hamindex$ will commute through $H_\Hamindex$ and cancel, leaving only those corresponding to the $\loc$-qudit subspace, i.e. those corresponding to the labels $\curs_j$ and $\curg_j$. 
Explicitly,
\begin{eqnarray*}
U_c\Big((j-1)\Delta+\delta\Big)^\dag {\ H_\Hamindex \ } U_c\Big((j-1)\Delta+\delta\Big)
&=&
	\bigg[\left( U_{a_{i_1 (j-1)}}^\dag \otimes\cdots\otimes U_{a_{i_\loc (j-1)}}^\dag \right)
		\left( u_{b_{i_1 j}}(\delta)^\dag \otimes\cdots\otimes u_{b_{i_\loc j}}(\delta)^\dag \right)
			{\ h_\Hamindex \ }\bigg.  \\
&&\qquad
		\bigg.\left( u_{b_{i_1 j}}(\delta) \otimes\cdots\otimes u_{b_{i_\loc j}}(\delta) \right)
	\left( U_{a_{i_1 (j-1)}} \otimes\cdots\otimes U_{a_{i_\loc  (j-1)}} \right)\bigg]
	 \otimes \id_{n-\loc}\\
&=&
U_{\curg_{j-1}}^\dag u_{\curs_j}(\delta)^\dag {\ h_\Hamindex \ } u_{\curs_j}(\delta) U_{\curg_{j-1}} \otimes \id_{n-\loc} \,.
\end{eqnarray*}
 
Thus, the protocol of applying $U_c$ to $H_\Hamindex$ is effectively the same as applying a protocol $u_{\curs_j}(\delta) U_{\curg_{j-1}}$ to $h_\Hamindex$, following the balanced cycle $\curL$. Since this is precisely the scheme defined in Protocol~\ref{prot:inefficient-bounded-balanced} applied to $h_\Hamindex$ (see Eq.~(\ref{eqn:Uc_breakdown})), we conclude from Theorem~\ref{theorem:inefficient-bounded-balanced} that it decouples $h_\Hamindex$.
 Consequently, $\bar H_\Hamindex^{(0)} = \bar h_\Hamindex^{(0)} \otimes \id_{n-\loc}  = 0$. This occurs for every term $H_\Hamindex$ in $H=\sum_\Hamindex H_\Hamindex$, whence $H$ itself is decoupled: $\bar H^{(0)} = \sum_\Hamindex \bar H_\Hamindex^{(0)} = 0$.

\end{proof}

\begin{remark} \label{remark:increase_n}
Once we have a BOA scheme that can decouple a system of $n$ qudits, the same scheme can be used (with the same BOA and therefore same length $N$) for a system of $n' < n$ qudits. This can be accomplished by simply ignoring $n-n'$ of the qudits, i.e. by having $U_c$ act as $\id$ on these $n-n'$ extra qudits (rather than as dictated by the original protocol). The proof of Theorem~\ref{theorem:BOA} remains unaffected because $H_k$ acts trivially on these extra qudits, i.e. they are not acted upon by $h_k$.
\end{remark}

Theorem~\ref{theorem:BOA} showed that decoupling protocols based on BOAs work, with control cycle length proportional to the BOA parameter $N$. We next show that BOAs can indeed be constructed and, moreover, that the construction gives rise to an \textit{efficient} decoupling protocol, in the sense that $N$ does not increase exponentially with $n$.

\section{Construction of balanced-cycle orthogonal arrays} \label{sec:BOA_exist}

The existence of balanced-cycle orthogonal arrays follows naturally from constructions of orthogonal arrays generated using classical linear codes, which we shall define shortly. We first give a brief outline of our BOA construction. This construction is via the generator matrix $G$ of a linear code, which is a linear mapping from $\F_q^k$ to $\F_q^n$ for some $k \leq n$. 
If we enumerate all elements of $\F_q^k$ in an arbitrary order and consider their image under $G$, this will form an OA of strength $\loc$ (for an appropriately chosen $k$). To obtain a BOA, we do this enumeration according to the prescription of an Eulerian cycle on $\F_q^k$. In doing so, we can guarantee that we always obtain a balanced cycle when we consider any submatrix of $\loc$ rows, ultimately ensuring that any $\loc$-local Hamiltonian term on those corresponding qudits will be decoupled. We now prove this, starting with a definition of a classical linear code.

\begin{definition}[Classical linear code]
A \textit{classical linear} $[n, k]_q$ \textit{code}, $C$,  is a $k$-dimensional
subspace of the vector space $\F_q^n$. 
For any vector $x = (x_1, \ldots, x_n)^T \in \F_q^n$, define 
${\rm wt}(x) = |\{ i\in \{1, \ldots, n\} : x_i \not=0\}|$. The
 \textit{distance} of a linear code $C$ is defined to be $\min\{{\rm wt}(c) : c \in C, c\neq {o}\}$, where ${o}$
denotes the zero vector.
An $[n,k]_q$ linear code can be described by a \textit{generator matrix} $G$ of size $n\times k$ with entries from $\F_q$. 
$G$ maps the vectors $m\in\F_q^k$ onto the elements (\textit{code words}) of $C$ so that $C = G[\F_q^k] = \lbrace Gm \in \F_q^n : m\in \F_q^k  \rbrace$.
\end{definition}

The dual code $C^\perp$ of $C$ is defined by $C^\perp = \{ y \in
\F_q^n : x \cdot y = 0\  \forall x \in C\}$ with the dot product $x\cdot y = \sum_{i=1}^n x_i y_i$. The dual code is also a classical linear code, namely an $[n,n-k]_q$ code with some distance $\dist^\perp$ that we will refer to as the \textit{dual distance}.
Orthogonal arrays can be constructed from linear codes, as the following theorem \cite[Theorem 4.6]{HSS99} establishes. 
\begin{theorem}[OAs from linear codes]\label{theorem:OA_codes}${}$
Let $C$ be a linear $[n, k]_q$ code with dual distance
$\dist^\perp$. The $n\times q^k$ matrix $[Gm]_{m\in\F_q^k}$, whose columns are the $q^k$ vectors $Gm\in\F_q^n$ $(\forall m\in\F_q^k)$, is an $OA(q^k,n,q,\loc)$ with strength $\loc=\dist^\perp-1$.
\end{theorem}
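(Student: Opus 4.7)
The plan is to verify the orthogonal-array definition directly by restricting to any $\loc$ rows and analyzing the resulting subarray via linear algebra. Fix any distinct rows $i_1,\ldots,i_\loc\in\{1,\ldots,n\}$, and let $M$ be the $\loc\times k$ submatrix of $G$ consisting of these rows. Then the $j$-th column of the restricted $\loc\times q^k$ subarray is $M m$, where $m$ is the $j$-th element in the chosen enumeration of $\F_q^k$. Hence, as a multiset, the columns of the subarray are exactly $\{Mm : m\in\F_q^k\}$, and the OA property with $\loc = \dist^\perp-1$ reduces to showing that the linear map $M:\F_q^k\to\F_q^\loc$ is surjective. Once this is established, $\ker M$ has dimension $k-\loc$, so each $c\in\F_q^\loc$ has exactly $\lambda = q^{k-\loc}$ preimages, which is precisely the balancedness condition required.

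The heart of the argument is a standard duality step showing that $M$ has full row rank $\loc$. I would proceed by contrapositive: suppose the rows of $M$ are linearly dependent, so there exists a nonzero $\alpha\in\F_q^\loc$ with $\alpha^T M = 0$. Lift $\alpha$ to a vector $y\in\F_q^n$ by setting $y_{i_j} = \alpha_j$ for each $j$ and $y_i = 0$ otherwise; then $y\neq 0$, $\mathrm{wt}(y)\leq \loc$, and by construction $y^T G = \alpha^T M = 0$. For every codeword $c = Gm$, this gives $y\cdot c = (y^T G)m = 0$, so $y\in C^\perp$. But this produces a nonzero dual codeword of weight at most $\loc = \dist^\perp - 1$, contradicting the definition of dual distance. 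Hence the rows of $M$ are linearly independent, $M$ is surjective, and the OA property with strength $\loc$ follows.

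The main (essentially only) obstacle is setting up the duality in the second paragraph cleanly: one must make sure that a weight-$\leq\loc$ element of $C^\perp$ supported on the chosen rows corresponds precisely to a linear dependence among those same rows of $G$, which is really just the observation that the parity check $y^T G = 0$ can be read row-by-row. Everything else, including the count $\lambda = q^{k-\loc}$ for the fibers of $M$, reduces to routine facts about kernels of surjective linear maps over $\F_q$, so no further machinery is needed.
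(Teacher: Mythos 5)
Your argument is correct and complete: the reduction of the OA property to surjectivity of the $\loc\times k$ row-submatrix $M$, the fiber count $\lambda=q^{k-\loc}$, and the duality step identifying a row dependence of $M$ with a nonzero dual codeword of weight at most $\dist^\perp-1$ are all sound (and note the Singleton bound for $C^\perp$ guarantees $\loc\leq k$, so surjectivity is not vacuous). The paper itself gives no proof of this statement---it imports it as Theorem 4.6 of the cited Hedayat--Sloane--Stufken reference---and your proof is precisely the standard argument given there, so there is nothing to reconcile.
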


Let $C$ be an $[n,k]_q$ with dual distance $\dist^\perp = \loc+1$ and generating matrix $G$.
Let $\calL$ be an Eulerian cycle on the Cayley graph $\Gamma(\F_q^k,\Sqk)$,  where $\Sqk$ is a generating set for $\F_q^k$; thus, we can write $\calL_{\F_q^k} = (m_0,\ldots,m_{N-1})$ with transitions $\calL_{\Sqk} = (s_1,\ldots,s_{N})$ and $N=q^k\,|\Sqk|$. 
Because $d$ is a prime power, say $d=p^e$ for some prime $p$, the minimal generating set is of size $|\Sqk| = 2ke$.
We are interested in the image of the cycle in the codespace; thus, consider the Eulerian cycle, denoted $G\calL$, on $\Gamma(G[\F_q^k],G[\Sqk])$, where $G[\F_q^k]=C \subset \F_q^n$ is the image of $\F_q^k$ under $G$, i.e. is the codespace.
In other words, $G\calL_{\F_q^k} = (Gm_0,\ldots,Gm_{N-1})$ and $G\calL_{\Sqk} = (Gs_1,\ldots,Gs_{N})$.

To avoid possible confusion, we emphasize here that although we will use $G\calL$ to construct a BOA, neither $\calL$ nor $G\calL$ will serve as the balanced cycle to which Theorem~\ref{theorem:inefficient-bounded-balanced} applies (which is why we have used the notation $\calL$ rather than $\curL$). Rather, for our efficient decoupling scheme, we construct an array $A_{G\calL}$, dictated by $G\calL$, and prove that the result is a BOA by showing that if we consider any subarray of $\loc$ rows, it gives rise to some balanced cycle $\curL$ on $\curG=\F_q^\loc$. The notation and relationships of the various groups and cycles used in this paper is sketched in Fig.~\ref{fig:groups}.

\begin{figure}
	\begin{center}
		\includegraphics[width=6in]{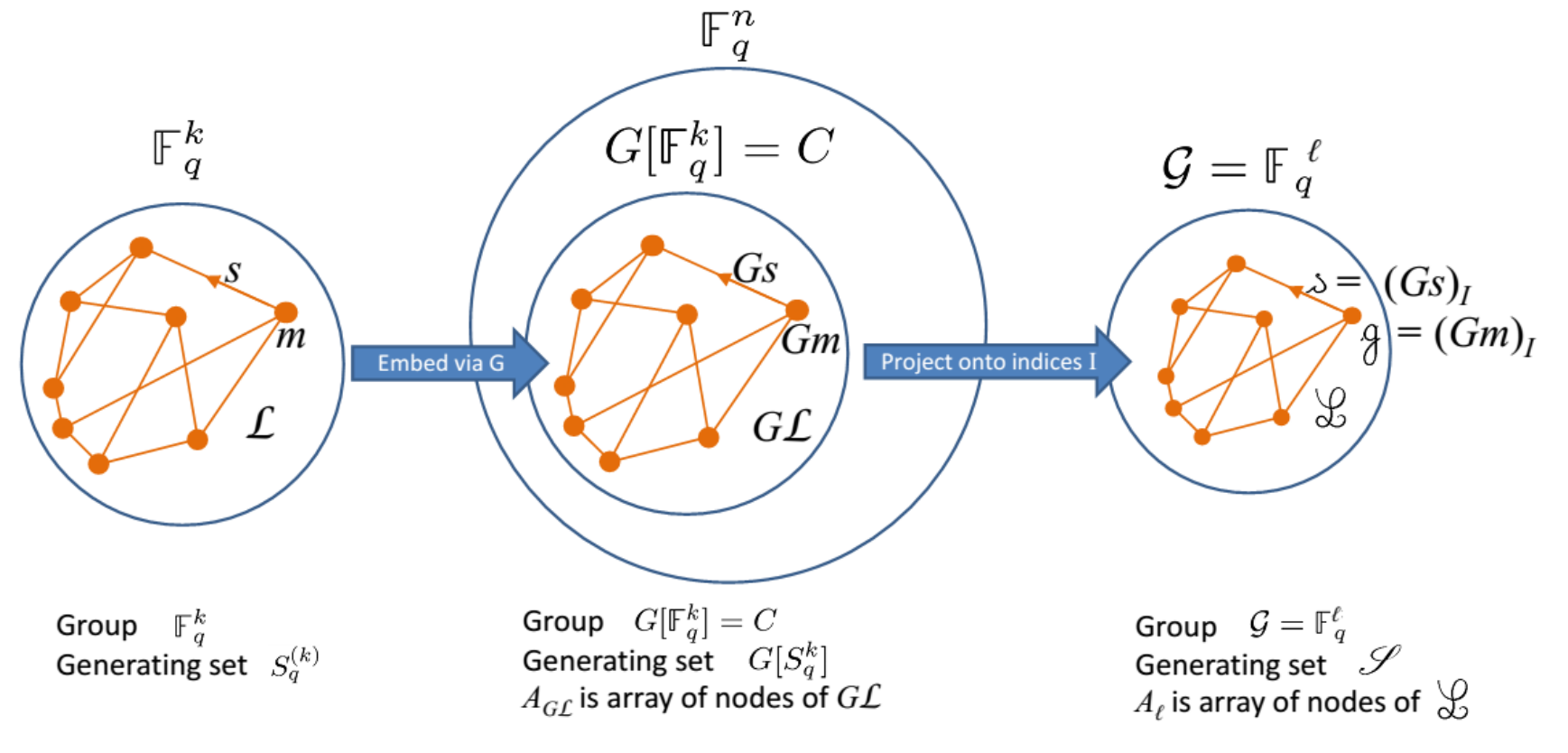}
		\caption
		{\label{fig:groups}
		Names and relationships between various groups and cycles in the BOA construction. The graph is a schematic of a Cayley graph. As explained in the text, the BOA, $A_{G\calL}$, is the array form of $G\calL$, which is the result of mapping an Eulerian cycle $\calL$ under the linear code generating matrix $G$. As a BOA, $A_{G\calL}$ has the property that if one considers a subarray of $\loc$ rows, the result describes a balanced cycle. Specifically, let $I \subset \lbrace 1,2,\ldots,n \rbrace$ be a subset of $\loc$ indices. $(Gm)_I$ denotes the $\loc$-tuple of elements of $Gm$ (itself an $n$-tuple) corresponding to the indices $I$. The cycle $\curL$, composed of nodes $(Gm)_I$ (in the same order in which $\calL$ was composed of $m$), is shown to be a balanced cycle.
		}
	\end{center}
\end{figure}

We turn $G\calL$ into an array $A_{G\calL}$ in the obvious way as follows.
Each element $Gm_j$ of $G\calL_{\F_q^k}$ is a column vector in $\F_q^n$. Therefore we may associate to $G\calL$ the $n\times N$ matrix $A_{G\calL} = [Gm]_{m\in\calL_{\F_q^k}}$ with elements $a_{ij} = (Gm_j)_i$, so that the $j$th column of $A_{G\calL}$ is the vector $Gm_j$, and the columns are arranged in the order of the Eulerian cycle $G\calL$.
Note that since we assumed that Eulerian cycles always start with the (additive) identity element, i.e. the zero vector $o\in\F_q^k$, and since $G$ maps the zero vector to the zero vector ($Go = o\in\F_q^n$), the first column of $A_{G\calL}$ is the zero vector of $\F_q^n$.
\begin{lemma} 
$A_{G\calL}$ is an $OA_{N/q^\loc}(N,n,q,\loc)$ with $N=q^k\,|\Sqk|$.
\end{lemma}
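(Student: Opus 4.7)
The plan is to obtain $A_{G\calL}$ as a column-multiplicity rearrangement of the ``plain'' linear-code OA from Theorem~\ref{theorem:OA_codes}, and then invoke Remark~\ref{remark:OA_order} to count the resulting strength-$\loc$ index. The key observation is that an Eulerian cycle on the Cayley graph $\Gamma(\F_q^k,\Sqk)$, which is a regular directed graph of out-degree $|\Sqk|$, must visit every vertex of $\F_q^k$ exactly $|\Sqk|$ times (the cycle leaves each vertex once along every outgoing generator edge, since each such edge is traversed exactly once). Hence the multiset of columns of $A_{G\calL}$ is precisely the multiset $\{Gm : m\in\F_q^k\}$ with each element repeated $|\Sqk|$ times.

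First I would apply Theorem~\ref{theorem:OA_codes} to the code $C$ with generator matrix $G$ and dual distance $\dist^\perp=\loc+1$: this gives that the $n\times q^k$ matrix whose columns are the vectors $\{Gm\}_{m\in\F_q^k}$, listed in any order, forms an $OA(q^k,n,q,\loc)$. Since such an OA has $q^k$ columns and every $\loc$-tuple in $\F_q^\loc$ must appear in every $\loc$-row subarray the same number of times, that index equals $q^{k-\loc}$; call this base array $A_0$.

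Next I would use the preceding counting observation about Eulerian cycles to write $A_{G\calL}$ as the array obtained from $A_0$ by repeating each of its columns exactly $r=|\Sqk|$ times (in the order prescribed by $\calL$). By Remark~\ref{remark:OA_order}, reordering and duplicating columns preserves the OA property and scales the index by the repetition factor $r$. Consequently $A_{G\calL}$ is an $OA_{r\cdot q^{k-\loc}}(rq^k,n,q,\loc)$. A final one-line verification that $r\cdot q^{k-\loc} = |\Sqk|\cdot q^{k-\loc} = N/q^\loc$ and $rq^k = |\Sqk|\cdot q^k = N$ matches the claimed parameters.

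I do not anticipate a real obstacle; the only step worth being careful with is the vertex-visit count in the Eulerian cycle, where one has to note that the Cayley graph is regular with equal in- and out-degree $|\Sqk|$, so an Eulerian circuit enters and leaves every vertex exactly $|\Sqk|$ times. Everything else is bookkeeping about column multiplicities and the elementary identity $\lambda = N/q^\loc$ for any orthogonal array of strength $\loc$ on $N$ columns over $\F_q$.
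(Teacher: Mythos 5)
Your proposal is correct and follows essentially the same route as the paper: apply Theorem~\ref{theorem:OA_codes} to the codeword array, note that the Eulerian cycle visits each vertex of $\F_q^k$ exactly $|\Sqk|$ times so that $A_{G\calL}$ consists of $|\Sqk|$ copies of each codeword column, and conclude via Remark~\ref{remark:OA_order}. The only difference is that you spell out the vertex-visit count explicitly, which the paper leaves implicit here (and states later in the proof of Lemma~\ref{lemma:cycle}).
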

\begin{proof}
By Theorem~\ref{theorem:OA_codes}, an array whose $q^k$ columns are the vectors of the codespace is an OA. The columns of $A_{G\calL}$ are precisely $|\Sqk|$ copies of each vector in the codespace, and therefore (using Remark~\ref{remark:OA_order}), $A_{G\calL}$ is an $OA$.
\end{proof}

Let $s\in \Sqk$. $Gs\in\F_q^n$, so $Gs=\Big((Gs)_1,\ldots,(Gs)_n \Big)^T$, where we use the notation $(Gs)_i\in\F_q$ to denote the $i$th component of the column vector $Gs$. 
Fix $\loc$ distinct numbers $i_1,\ldots,i_\loc\in \lbrace 1,\ldots,n \rbrace$ and write $I = \lbrace i_1,\ldots,i_\loc \rbrace$. Let $(Gs)_I$ denote the $\loc$-tuple $\Big((Gs)_{i_1},\ldots,(Gs)_{i_\loc}\Big)^T$.
Let $\curS = \left\lbrace (Gs)_I : s\in \Sqk \right\rbrace$. 
\begin{lemma} \label{lemma:generating}
$\curS$ is a generating set for $\curG=\F_q^\loc$.
\end{lemma}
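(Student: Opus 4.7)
The plan is to reduce the assertion to a statement about the rank of a submatrix of $G$, and then to extract that rank fact from the dual-distance hypothesis.

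First I would observe that the map $\varphi_I : \F_q^k \to \F_q^\loc$ defined by $\varphi_I(m) = (Gm)_I$ is a group homomorphism (indeed, it is $\F_q$-linear), given by the $\loc \times k$ submatrix $G_I$ of $G$ consisting of the rows indexed by $I = \{i_1, \dots, i_\loc\}$. Since $\Sqk$ generates $\F_q^k$ as an abelian group and $\varphi_I$ is a homomorphism, the set $\curS = \{\varphi_I(s) : s \in \Sqk\}$ generates the \emph{image} of $\varphi_I$. Thus it is enough to prove that $\varphi_I$ is surjective onto $\F_q^\loc$, i.e.\ that the rows $g_{i_1}, \dots, g_{i_\loc}$ of $G$ (viewed as vectors in $\F_q^k$) are $\F_q$-linearly independent.

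Next I would invoke the standard duality between the minimum distance of a code and linear dependencies in the parity-check matrix: if $G$ is a generator matrix for $C$, then $G^T$ is a parity-check matrix for $C^\perp$, so the minimum distance $\dist^\perp$ of $C^\perp$ equals the minimum number of $\F_q$-linearly dependent columns of $G^T$, which is the same as the minimum number of linearly dependent rows of $G$. By hypothesis $\dist^\perp = \loc + 1$, so every set of $\loc$ rows of $G$ is linearly independent. In particular $g_{i_1}, \dots, g_{i_\loc}$ are linearly independent, so $G_I$ has rank $\loc$ and hence $\varphi_I : \F_q^k \twoheadrightarrow \F_q^\loc$ is surjective. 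Combined with the first paragraph, this shows $\curS$ generates $\curG = \F_q^\loc$, as desired.

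The only subtle point — and hence the main thing I would be careful about — is the passage from ``$\Sqk$ generates $\F_q^k$ as an abelian group'' to ``$\{G_I s : s \in \Sqk\}$ generates $\F_q^\loc$ as an abelian group.'' This is automatic because $G_I$ is additive: if $m = s_1 + \dots + s_r$ with each $s_j \in \Sqk$, then $G_I m = G_I s_1 + \dots + G_I s_r$, so $\varphi_I(\Sqk)$ generates $\varphi_I(\F_q^k) = \F_q^\loc$. No $\F_q$-scalar multiplication on the generators is needed, which is good since $\Sqk$ was only assumed to generate $\F_q^k$ additively (with $|\Sqk| = 2ke$ when $d = p^e$). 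Everything else is standard coding-theoretic bookkeeping.
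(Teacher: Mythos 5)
Your proof is correct and follows essentially the same two-step skeleton as the paper's: show that $\curS$ generates the image of $m \mapsto (Gm)_I$ because the map is additive, and show that this image is all of $\curG = \F_q^\loc$. The only difference is in the second step, where you establish surjectivity directly from the dual-distance hypothesis (any $\loc$ rows of $G$ are linearly independent since $\dist^\perp = \loc+1$), whereas the paper reads surjectivity off the already-established fact that $A_{G\calL}$ is an orthogonal array of strength $\loc$ — your argument effectively inlines the proof of that cited coding-theoretic fact, and both are valid.
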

\begin{proof}
Let $\curg\in\curG$. By definition, since $A_{G\calL} = [Gm]_{m\in\calL_{\F_q^k}}$ is an OA of strength $\loc$, the $\loc\times N$ subarray obtained by only considering rows $i_1,\ldots,i_\loc$ contains each possible $\loc$-tuple of elements of $\F_q$, and therefore contains $\curg$. Thus, $\exists \ Gm$ such that $(Gm)_I = \curg$.
Since $\Sqk$ is a generating set for $\F_q^k$, $\exists \ u_1,\ldots,u_r\in \Sqk$ such that $m = u_1+\cdots +u_r$, and therefore $Gm = Gu_1+\cdots +Gu_r$. But then $(Gu_j)_I\in\curS$ for every $j=1,\ldots,r$ and $\curg = (Gm)_I = (Gu_1)_I+\cdots +(Gu_r)_I$, whence $\curS$ generates $\curG$.
\end{proof}

Recall $A_{G\calL}=[Gm]_{m\in\calL_{\F_q^k}}$ and consider the $\loc\times N$ submatrix $A_\loc = [\curg_{j}]$ of $A_{G\calL}$, whose $j$th column is $\curg_j = (Gm_j)_I \in \curG$. 
Define the ordered list $\curL_\curG =  \left(\curg_0,\ldots,\curg_{N-1}\right)$.
Although $\curL$ depends on $I$, we suppress mention of this for notational simplicity.
\begin{lemma} \label{lemma:cycle}
$\curL$ is a balanced cycle on $\Gamma(\curG, \curS)$.
\end{lemma}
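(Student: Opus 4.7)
The plan is to verify the two requirements for a balanced cycle on $\Gamma(\curG,\curS)$: first that $\curL_\curG=(\curg_0,\ldots,\curg_{N-1})$ truly traces a closed walk whose step labels lie in $\curS$, and second that for each fixed generator $\curs\in\curS$ there is a constant $\mu_\curs>0$ such that $\leavenode{\curg}{\curs}$ is traversed exactly $\mu_\curs$ times, independent of $\curg$.

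For the first requirement, note that the transitions of $\calL$ are $s_j\in\Sqk$ with $m_j=s_j+m_{j-1}$. Applying $G$ and restricting to the index set $I=\{i_1,\ldots,i_\loc\}$ gives $\curg_j-\curg_{j-1}=(G(m_j-m_{j-1}))_I=(Gs_j)_I$, which by the very definition of $\curS$ lies in $\curS$. Closure is automatic since $m_N=m_0$ implies $\curg_N=\curg_0$. Hence $\curL$ is a well-defined cycle on $\Gamma(\curG,\curS)$.

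For the balancedness, I would fix $\curs\in\curS$ and $\curg\in\curG$ and count the indices $j\in\{1,\ldots,N\}$ with $\curg_{j-1}=\curg$ and $\curs_j=\curs$. Writing $G_I$ for the $\loc\times k$ submatrix of $G$ indexed by $I$, define the preimages
\[
P_\curg=\{m\in\F_q^k : G_I m=\curg\},\qquad Q_\curs=\{s\in\Sqk : G_I s=\curs\}.
\]
Because $\calL$ is an Eulerian cycle on $\Gamma(\F_q^k,\Sqk)$, each edge $\leavenode{m}{s}$ with $m\in\F_q^k$ and $s\in\Sqk$ is traversed exactly once. The indices $j$ we want to count are precisely those for which $m_{j-1}\in P_\curg$ and $s_j\in Q_\curs$, so the count equals $|P_\curg|\cdot|Q_\curs|$.

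The main obstacle is then to show that $|P_\curg|$ does not depend on $\curg$. This is exactly where Lemma~\ref{lemma:generating} does the work: since $\curS=\{(Gs)_I:s\in\Sqk\}$ generates $\curG=\F_q^\loc$, the image of the linear map $G_I:\F_q^k\to\F_q^\loc$ contains a generating set for $\F_q^\loc$, so $G_I$ is surjective. By the rank-nullity theorem, every nonempty preimage has size $|\ker G_I|=q^{k-\loc}$, and by surjectivity every $P_\curg$ is nonempty. Therefore $|P_\curg|=q^{k-\loc}$ for all $\curg\in\curG$, and setting $\mu_\curs=q^{k-\loc}|Q_\curs|$ yields $\mu_\curs>0$ (since $Q_\curs\neq\emptyset$ by the very definition of $\curS$) and $\leavenode{\curg}{\curs}$ appears exactly $\mu_\curs$ times for every $\curg$. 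This is precisely the balanced-cycle condition, completing the proof.
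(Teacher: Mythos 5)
Your proof is correct, and its skeleton matches the paper's: you check that the step labels $(Gs_j)_I$ lie in $\curS$ and that the walk closes up, and you use the Eulerian property of $\calL$ to count traversals of $\leavenode{\curg}{\curs}$ as $|P_\curg|\cdot|Q_\curs|$, which is exactly the paper's $|M_\curg|\cdot|S_\curs|$. The one place you take a genuinely different route is in showing $|P_\curg|$ is independent of $\curg$. The paper argues combinatorially: each $\curg$ appears $|M_\curg|\,|\Sqk|$ times in $\curL_\curG$, and since $A_{G\calL}$ is an orthogonal array of strength $\loc$ this count must be the same for all $\curg$, forcing $|M_\curg|$ to be constant. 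You instead argue linear-algebraically: Lemma~\ref{lemma:generating} gives that the image of the restricted map $G_I$ contains a generating set of $\F_q^\loc$, hence $G_I$ is surjective, and rank--nullity makes every fiber a coset of $\ker G_I$ of size $q^{k-\loc}$. Both are valid. Yours buys the explicit value $\mu_\curs=q^{k-\loc}|Q_\curs|$ rather than mere $\curg$-independence, and makes the dependence on Lemma~\ref{lemma:generating} explicit; the paper's version reuses the already-established OA property and would survive even in a setting where the fibers were constant for combinatorial rather than linear reasons.
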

\begin{proof}
 Each $\curg\in\curG$ is present in $\curL_\curG$ an equal number of times because $A_{G\calL}$ is an OA of strength $\loc$.  
 The transitions in this cycle are $\curs_j = \curg_j - \curg_{j-1} = (Gm_j)_I - (Gm_{j-1})_I = (Gs_j)_I \in \curS$, so the transition representation $\curL_\curS =  \left(\curs_1,\ldots,\curs_{N}\right)$ consists of generators from $\curS$; $\curL$ is therefore a cycle on the Cayley graph $\Gamma(\curG, \curS)$. Moreover, because $\calL$ is an Eulerian cycle and $A_{G\calL}$ is an OA, $\curL$ is a balanced cycle (although not an Eulerian cycle): informally, each $\leavenode{Gm}{Gs}$ occurs in $G\calL$ an equal (non-zero) number of times (namely once, independent of $Gm$) for each $Gs$, so each $\leavenode{\curg}{\curs}$ occurs in $\curL$ an equal (non-zero) number of times (independent of $\curg=(Gm)_I$, since $A_{G\calL}$ is an OA) for each $\curs = (Gs)_I$.

Explicitly, consider any $\curg\in\curG$ and $\curs\in\curS$. 
Let $M_\curg = \lbrace m\in\F_q^k : (Gm)_I = \curg \rbrace$. $\calL_{\F_q^k}$ is an Eulerian cycle so each element in $\F_q^k$ shows up precisely $|\Sqk|$ times in $\calL_{\F_q^k}$. In particular, therefore, each $m\in M_\curg$ appears precisely $|\Sqk|$ times in $\calL_{\F_q^k}$, and consequently, $\curg$ shows up in $\curL$ precisely $|M_\curg| |\Sqk|$ times. But $A_{G\calL}$ is an OA of strength $\loc$, so $|M_\curg| |\Sqk|$ must then be independent of $\curg$, and therefore $|M_\curg|$ is also independent of $\curg$.
Since $\curs\in\curS$, let $S_\curs = \lbrace s\in \Sqk : (Gs)_I = \curs \rbrace$. This set is non-empty by definition of $\curS$. In general, $|S_\curs|$ may depend on $\curs$.
Now, $\forall m\in M_\curg$ and $\forall s\in S_\curs$, the Eulerian property of $\calL$ guarantees that $\leavenode{Gm}{Gs}$ occurs precisely once in $G\calL$. Therefore, $\leavenode{\curg}{\curs}$ occurs in $\curL$ precisely $|S_\curs| |M_\curg| \geq 1$ times, which is independent of $\curg$. Thus $\curL$ is a balanced cycle.
\end{proof}

Together, the above lemmas prove the existence of BOAs and how to construct them from classical linear codes.
\begin{theorem} \label{theorem:BOA_exist}
Let $C$, $\calL$, and $A_{G\calL}$ be as above, i.e.
$C$ is an $[n,k]_q$ code with dual distance $\dist^\perp = \loc+1$ and generating matrix $G$, $\calL$ is an Eulerian cycle on the Cayley graph $\Gamma(\F_q^k,\Sqk)$, written $\calL_{\F_q^k} = (m_0,\ldots,m_{N-1})$ and $\calL_{\Sqk} = (s_1,\ldots,s_{N})$, and
$A_{G\calL} = [Gm]_{m\in\calL_{\F_q^k}}$ is an OA whose columns are the vectors $Gm_j$.
Then $A_{G\calL}$ is a $BOA(N,n,q,\loc)$ with $N=q^k|\Sqk|$.
\end{theorem}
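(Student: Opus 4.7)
The plan is to observe that this theorem is essentially a corollary of the three preceding lemmas together with the definition of a BOA. I would start by unpacking the definition: to show that $A_{G\calL}$ is a $BOA(N,n,q,\loc)$, I need to verify that for \emph{every} choice of $\loc$ distinct row indices $i_1,\ldots,i_\loc \in \{1,\ldots,n\}$, the corresponding $\loc \times N$ subarray (obtained by restricting to those rows) defines a balanced cycle on the Cayley graph of $\curG = \F_q^\loc$ with respect to \emph{some} generating set, which may depend on the choice of rows.

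Next, I would fix an arbitrary choice of rows $I = \{i_1,\ldots,i_\loc\}$ and carry out the stitching of the lemmas. Specifically, set $\curS = \{(Gs)_I : s \in \Sqk\}$ as in the exposition. Lemma~\ref{lemma:generating} then immediately provides that $\curS$ is a generating set for $\curG$, using the fact (from the first lemma of the section) that $A_{G\calL}$ is an orthogonal array of strength $\loc$, which in turn rests on Theorem~\ref{theorem:OA_codes} and the hypothesis $\dist^\perp = \loc+1$. Lemma~\ref{lemma:cycle} then provides that the list $\curL_\curG = (\curg_0,\ldots,\curg_{N-1})$ defined by $\curg_j = (Gm_j)_I$ is a balanced cycle on $\Gamma(\curG,\curS)$. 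Since the choice of $I$ was arbitrary, the defining condition of a BOA is satisfied for every $\loc$-row subarray.

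Finally, I would record the length parameter: by construction $\calL$ is an Eulerian cycle on the Cayley graph $\Gamma(\F_q^k,\Sqk)$, and since a regular directed graph of out-degree $|\Sqk|$ on $q^k$ vertices has exactly $q^k |\Sqk|$ edges, $\calL$ has length $N = q^k |\Sqk|$, and $A_{G\calL}$ inherits this number of columns.

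There is no real obstacle here, because the technical content has been distilled into the three prior lemmas; the only thing the theorem does is package them into the BOA definition for every admissible row selection $I$. If anything, the mildly subtle point to flag is that the generating set $\curS$ is permitted to depend on $I$ (as the BOA definition explicitly allows), and that Lemma~\ref{lemma:generating} must be re-invoked separately for each $I$ to produce the appropriate $\curS$. Once that dependence is made explicit, the proof is a one-line assembly: ``by Lemmas~\ref{lemma:generating} and~\ref{lemma:cycle}, for every $\loc$-subset $I$ of rows there is a generating set $\curS$ of $\curG$ such that the corresponding subarray encodes a balanced cycle on $\Gamma(\curG,\curS)$, which is exactly the defining property of a $BOA(N,n,q,\loc)$ with $N = q^k|\Sqk|$.''
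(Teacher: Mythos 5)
Your proposal is correct and follows essentially the same route as the paper: fix an arbitrary $\loc$-subset $I$ of rows, take $\curS = \{(Gs)_I : s\in \Sqk\}$, and invoke Lemma~\ref{lemma:generating} and Lemma~\ref{lemma:cycle} to conclude the subarray is a balanced cycle on $\Gamma(\curG,\curS)$, with $N=q^k|\Sqk|$ coming from the Eulerian cycle length. Your explicit remark that $\curS$ may depend on $I$ and that the lemmas are re-invoked per choice of $I$ matches the paper's (terser) packaging of the same argument.
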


\begin{proof}
For every choice of $\loc$ distinct integers $I = \lbrace i_1,\ldots,i_\loc \rbrace \subset \lbrace 1,\ldots, n\rbrace$, the set $\curS = \left\lbrace (Gs)_I : s\in \Sqk \right\rbrace$ is a generating set for $\curG$ (by Lemma~\ref{lemma:generating}) such that if $\curg_j$ denotes the $j$th column of $A_{G\calL}$ restricted to rows $i_1,\ldots,i_\loc$, then (by Lemma~\ref{lemma:cycle}) $\curL_\curG = \left(\curg_0,\ldots,\curg_{N-1}\right)$ defines a balanced cycle on $\Gamma(\curG,\curS)$.
\end{proof}

For $n$ interacting qudits of dimension $d=p^e$ (for some prime $p$ and positive integer $e$) that obey an $\loc$-local Hamiltonian,
this construction therefore allows 
\begin{equation}\label{eqn:N}
N=q^k |\Sqk| = q^{k} 2ke
\end{equation}
where $k$ is the dimension of the code used and $q=d^2$.
Observe that the BOA decoupling protocol (Protocol~\ref{prot:BOA}) for this BOA construction has a control cycle length of $T_c = N\Delta = d^{2k} 2ke \Delta$ where $\Delta$ is some fixed length of time. 
For example, in the qubit ($d=2$) case discussed above, $|\Sqk[4]| = 2k$, whence $T_c = N\Delta$ with $N= (2k) 4^k$. 
To maximize efficiency for a given $n$ and $\loc$, one should select a code that minimizes $k$ (equivalently, select a dual code that maximizes $k^\perp = n-k$). 

There exist many good families of classical linear codes.
For instance, for 2-local interactions, we can (as was done in \cite{Roetteler:2008} for OAs) rely on  $[n,k]_q$ Hamming codes with dual distance 3 such that $k = \log_q\big((q-1)n+1\big)$; our scheme then has $N$ scaling like $n\log(n)$.
This protocol is therefore much more efficient than a naive protocol of applying balanced-cycle decoupling (including Eulerian decoupling) without exploiting the $\loc$-local structure of the Hamiltonian, which would have a control cycle length that scales exponentially with $n$. 
It is also more efficient than the method of \cite{eulerOA}, which required $N = d^{4k}$, i.e. whose scaling for this case 
($\loc=2$, using Hamming codes) is quadratic in $n$.
Next, we address codes for BOA construction with values of $\loc$ greater than 2.

\section{BOA decoupling schemes from BCH codes}

In this section we show how to construct schemes that achieve decoupling for $\loc$-local Hamiltonians on ${\cal H} \cong (\C^d)^{\otimes n}$ for arbitrary $\loc$, $n$, and prime power $d$. Besides the machinery of balanced-cycle orthogonal arrays (BOAs) that was introduced in the previous sections, our construction relies on BCH codes as a particular vehicle to construct good BOAs. The choice of BCH codes results from the fact that they are among the best known codes for the particular situation  where the distance is a fixed, small number and the goal is to maximize the overall code dimension. Using the dual of a BCH code when constructing the corresponding orthogonal arrays, we obtain schemes with a designed OA strength (i.e. locality $\loc$) while having a small $N$ in the corresponding decoupling protocol. We begin by briefly recalling some basics about BCH codes; for more details on finite fields and BCH codes see, for example, the textbooks \cite{MS:77,LC:2004,Roth:2006}.

\begin{definition}[BCH code]\label{def:bch}
Let $\alpha$ be a primitive $n$-th root of unity in the finite field $\F_{q^m}$, where $q$ is a prime power, $n\geq 2$, and $m \geq 1$. A \emph{BCH code} over $\F_q$ of length $n$ and designed distance $D$, where $2 \leq D \leq n$, is a cyclic polynomial code defined by the zeros 
\[ 
\alpha^b, \alpha^{b+1}, \ldots, \alpha^{b+D-2},
\]
where $b\geq 1$ is a positive integer.
\end{definition}

The generator polynomial $g(x)$ of the cyclic code introduced in Definition \ref{def:bch}
is given by $g(x) = \text{lcm}(M_b(x), M_{b+1}(x), \ldots, M_{b+D-2}(x))$, where $M_i(x)$ denotes the minimal polynomial of $\alpha^i$ over $\F_q$. Note that even though the zeros of the code lie in an extension field $\F_{q^m}$ over $\F_q$, the BCH code itself is a cyclic code over the ground field $\F_q$. Furthermore, it is known that a BCH code defined this way has a distance $\dist$ that is at least $D$, which is why $D$ is sometimes called the ``designed distance.'' Note that the actual distance $\dist$ of the code might exceed $D$. The possible lengths of BCH codes are quite restricted, as any admissible length $n$ must be a divisor of the order of the multiplicative group of $\F_{q^m}$, i.e. must be a divisor of $q^m-1$. In the following we restrict ourselves to the case where $n=q^m-1$, which is called the case of \textit{primitive BCH codes}. Furthermore we only consider the case where $b=1$, which is called the case of \textit{narrow-sense BCH codes}. We denote these codes by ${\rm BCH}(\F_{q^m}/\F_q, D)$, and we note that they always exist.

For any linear error-correcting code $C=[n,k,\dist]_q$ of length $n$, dimension $k$, and distance $\dist$, an \textit{extension} $C'=[n+1,k,\dist' \geq \dist]_q$ can be defined by adding another coordinate and an overall parity check. At the level of parity check matrices, this corresponds to appending the 
parity check matrix $M$ of $C$ with an all-zeros column ${\mathbf 0}$ and an all-ones row ${\mathbf 1}^T$ so that $C'$ has the new parity check matrix 
$\left[ \begin{array}{cc} {\mathbf 1}^T & 1 \\ M & {\mathbf 0} \end{array} \right]$. For binary codes, the distance of the extension is easy to characterize: if $\dist\equiv 0 \mod  2$ then $\dist'=\dist$ and if $\dist\equiv 1 \mod 2$ then $\dist'=\dist+1$. In general over larger alphabets, however, it is possible that the distance increases  even when $\dist$ is even. When applying an extension to the BCH codes introduced above, we use the notation ${\rm BCH^{ext}}(\F_{q^m}/\F_q, D)$. We make use of the following theorem about such codes.

\begin{theorem}\label{thm:bchbound}
Let $\F_q$ be a finite field and let ${\rm BCH^{ext}}(\F_{q^m}/\F_q,D)=[n,k,\dist]_q$ be the extension of the primitive narrow-sense BCH code with designed distance $D$ constructed in Definition \ref{def:bch}, so $n=q^m$ and $\dist\geq D$. Assume that $D \leq q^{\lceil m/2 \rceil} +2$. Then the dimension $k$ of the code satisfies $k \geq n - m \big\lceil \frac{q-1}{q} (D-2)\big\rceil - 1 \geq n - m(D-2)-1$.
\end{theorem}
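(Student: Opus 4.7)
The plan is to reduce the dimension bound to a counting problem for $q$-cyclotomic cosets modulo $q^m - 1$. The underlying unextended primitive narrow-sense BCH code has length $n_0 = q^m - 1$ and generator polynomial $g(x) = \mathrm{lcm}(M_1(x), M_2(x), \ldots, M_{D-2}(x))$, where $M_i(x)$ is the minimal polynomial of $\alpha^i$ over $\F_q$, so its dimension equals $n_0 - \deg g(x)$. Since extension preserves dimension and increases the length by one, it suffices to establish $\deg g(x) \leq m \lceil (q-1)(D-2)/q \rceil$; the ``$-1$'' in the target bound is then the single unit of length added by extension.

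The key arithmetic input is $\deg M_i(x) = |C_i|$, where $C_i = \{i, qi, q^2 i, \ldots\} \pmod{n_0}$ is the $q$-cyclotomic coset containing $i$, together with the identity $C_{qi} = C_i$. Hence $\deg g(x)$ equals the total size of the \emph{distinct} cosets $C_i$ with $i \in \{1, \ldots, D-2\}$. Because $C_{qi} = C_i$, multiples of $q$ in this index range contribute no new cosets, so the number of distinct cosets is at most the number of non-multiples of $q$ in $\{1, \ldots, D-2\}$, which equals $\lceil (q-1)(D-2)/q \rceil$.

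The hypothesis $D \leq q^{\lceil m/2 \rceil} + 2$ enters precisely to guarantee that these cosets really are distinct. The task is to rule out that two non-multiples $i \neq j$ in $\{1, \ldots, D-2\}$ lie in a common coset, i.e., that $j \equiv q^t i \pmod{q^m - 1}$ for some $t \geq 1$. The standard analysis (see, e.g., \cite{MS:77, LC:2004, Roth:2006}) shows that this cannot happen when $i, j \leq q^{\lceil m/2 \rceil}$: if $q^t i < q^m - 1$ one has $q^t i = j$ outright, forcing $j$ to be a multiple of $q$, a contradiction; and if $q^t i$ wraps around modulo $q^m - 1$, a symmetric lower-bound argument on $t$ yields the same multiple-of-$q$ conclusion. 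Combined with the trivial bound $|C_i| \leq m$, this gives $\deg g(x) \leq m \lceil (q-1)(D-2)/q \rceil$, hence $k \geq (n_0 + 1) - m \lceil (q-1)(D-2)/q \rceil - 1 = n - m \lceil (q-1)(D-2)/q \rceil - 1$, which is the claimed first inequality. The second inequality $\lceil (q-1)(D-2)/q \rceil \leq D-2$ is immediate, since $(q-1)(D-2)/q \leq D-2$ and $D-2 \in \Z$.

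The main obstacle is the coset-disjointness step: although classical, it relies on non-trivial facts about the multiplicative order of $q$ modulo divisors of $q^m - 1$, and the boundary case $D-2 = q^{\lceil m/2 \rceil}$ demands careful handling to avoid off-by-one errors. A secondary point to verify is that the extended code in fact attains distance $\geq D$; this follows either because the unextended BCH distance already overshoots its designed value (via a wider BCH-bound argument on the zero set), or because the extension step itself bumps the minimum distance up by one, as the paper recalls just before the theorem statement.
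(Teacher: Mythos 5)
The paper itself does not prove this theorem; it defers to \cite{Sudan:2001} and \cite[Problem 8.12]{Roth:2006}, whose route is to realize ${\rm BCH^{ext}}(\F_{q^m}/\F_q,D)$ as the subfield subcode over $\F_q$ of a length-$q^m$ Reed--Solomon code of minimum distance $D$, and then to bound the dimension via Delsarte's theorem (the dual of a subfield subcode is the trace code of the dual) together with a cyclotomic-coset count applied to the trace of the dual RS code, which is spanned by the functions ${\rm Tr}(cx^i)$ for $0\leq i\leq D-2$. Your proposal instead computes the degree of the generator polynomial of the unextended cyclic code directly, and this route has a genuine gap.

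Concretely: by Definition~\ref{def:bch} with $b=1$, the narrow-sense code of designed distance $D$ has zeros $\alpha^1,\dots,\alpha^{D-1}$, so its generator polynomial is ${\rm lcm}(M_1,\dots,M_{D-1})$, not ${\rm lcm}(M_1,\dots,M_{D-2})$ as you wrote. With the correct polynomial your coset count yields only $k \geq n-1-m\lceil(D-1)(q-1)/q\rceil$, which falls short of the claimed bound by $m$ whenever $q \nmid (D-1)$; for example, with $q=4$, $m=2$, $D=4$ the code with zeros $\alpha,\alpha^2,\alpha^3$ has $k=15-6=9$, while the theorem asserts $k\geq 11$, so the theorem cannot be about that cyclic code and your use of $D-2$ is tacitly analyzing the designed-distance-$(D-1)$ code. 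Once you make that switch, the entire burden falls on the point you call ``secondary'': showing that the extension of the code with zeros $\alpha^1,\dots,\alpha^{D-2}$ has minimum distance $\geq D$. Neither of your suggested justifications establishes this: the paper explicitly warns that over non-binary alphabets the extension need not increase the distance, and the unextended BCH distance does not automatically overshoot its designed value (in the example above the zero set $\{1,2,4,8\}$ has longest consecutive run $\{1,2\}$, so the BCH bound gives only $3$, yet the extended code must have distance $\geq 4$). Supplying that distance guarantee is precisely what the Reed--Solomon embedding is for, and it is the real content of the theorem. A further minor point: the hypothesis $D\leq q^{\lceil m/2\rceil}+2$ is not needed ``to guarantee the cosets are distinct'' for your direction of the inequality --- coincident cosets only decrease $\deg g$ and hence improve the dimension lower bound; distinctness is relevant for exactness, and in the paper's route the hypothesis enters elsewhere.
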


See \cite{Sudan:2001} for a proof of Theorem \ref{thm:bchbound} that leverages the fact that the extended primitive narrow-sense BCH codes are subfield subcodes of the Reed-Solomon codes. See also \cite[Problem 8.12]{Roth:2006} and \cite{YD:2004}. By combining Theorem \ref{thm:bchbound} with the construction of Theorem \ref{theorem:BOA_exist} we now obtain the following result regarding bounded-strength decoupling for $\loc$-local Hamiltonians. 

\begin{theorem}\label{thm:decouplinglocal}
For any $\loc\geq 2$, $n \geq (\loc-1)^2$, and $q=d^2$ with $d\geq 2$ a prime power, there exists a $BOA(N,n,q,\loc)$ whose length $N$ scales as $N = O(n^{\loc-1} \log n)$. That is, there exists a bounded-strength BOA decoupling scheme to switch off $\loc$-local Hamiltonians on $n$ interacting $d$-dimensional qudits that uses $N = O(n^{\loc-1} \log n)$ time slices.
\end{theorem}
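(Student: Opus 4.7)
The strategy is to instantiate Theorem~\ref{theorem:BOA_exist} with $C$ taken to be the dual of an extended primitive narrow-sense BCH code of designed distance $D = \loc+1$. BCH codes are near-optimal in the regime of small, fixed distance, so their duals achieve minimal dimension $k$ for the required dual distance $\loc+1$, and minimizing $k$ directly minimizes the scheme length $N = q^k \cdot 2ke$ from Eq.~\eqref{eqn:N}.

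First I would fix $m = \lceil \log_q n \rceil$, so that $n \leq q^m < qn$, and take $B = {\rm BCH^{ext}}(\F_{q^m}/\F_q,\, \loc+1) = [q^m, k^B, \dist^B]_q$. To apply Theorem~\ref{thm:bchbound} I must verify its hypothesis $D \leq q^{\lceil m/2 \rceil} + 2$, which here reads $\loc - 1 \leq q^{\lceil m/2 \rceil}$. This is exactly where the assumption $n \geq (\loc-1)^2$ is needed: combined with $q^m \geq n$, it gives $q^{\lceil m/2 \rceil} \geq q^{m/2} \geq \sqrt{n} \geq \loc - 1$. Theorem~\ref{thm:bchbound} then yields $k^B \geq q^m - m(\loc-1) - 1$ and $\dist^B \geq \loc + 1$. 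Setting $C := B^\perp$ produces a $[q^m, k]_q$ code with $k = q^m - k^B \leq m(\loc-1) + 1$ and dual distance $\dist^\perp = \dist^B \geq \loc + 1$. If $\dist^\perp$ strictly exceeds $\loc + 1$, the resulting orthogonal array has strength larger than $\loc$, which only strengthens the required BOA property of strength $\loc$.

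Applying Theorem~\ref{theorem:BOA_exist} to this $C$, together with any Eulerian cycle on $\Gamma(\F_q^k, \Sqk)$, then produces a $BOA(N, q^m, q, \loc)$ whose length, by Eq.~\eqref{eqn:N}, satisfies
\[
N \;=\; q^k \cdot 2ke \;\leq\; 2e\bigl(m(\loc-1)+1\bigr)\, q^{\,m(\loc-1)+1}.
\]
Since $q^m < qn$ we have $q^{\,m(\loc-1)} < (qn)^{\loc-1}$, and since $m = \Theta(\log n)$ for fixed $q$, this collapses to $N = O(n^{\loc-1} \log n)$ as required. To pass from $q^m$ rows to the exact target of $n$, I would invoke Remark~\ref{remark:increase_n}: the BOA property is inherited by any sub-collection of rows (because any $\loc$ rows of the restricted array are also $\loc$ rows of the original), so discarding $q^m - n$ rows yields a $BOA(N, n, q, \loc)$ of the same length. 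The only genuinely delicate step is verifying the precondition $D \leq q^{\lceil m/2 \rceil}+2$ of Theorem~\ref{thm:bchbound}, for which the hypothesis $n \geq (\loc-1)^2$ has been inserted precisely; the remainder is assembly of pieces already in hand.
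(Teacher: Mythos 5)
Your proposal is correct and follows essentially the same route as the paper: embed $n$ into $q^m$ with $m=\lceil\log_q n\rceil$, take $C$ to be the dual of ${\rm BCH^{ext}}(\F_{q^m}/\F_q,\loc+1)$, bound $k\leq m(\loc-1)+1$ via Theorem~\ref{thm:bchbound}, and apply Theorem~\ref{theorem:BOA_exist} together with Remark~\ref{remark:increase_n} to get $N=q^k\cdot 2ke=O(n^{\loc-1}\log n)$. Your explicit verification that $n\geq(\loc-1)^2$ guarantees the hypothesis $D\leq q^{\lceil m/2\rceil}+2$ of Theorem~\ref{thm:bchbound} is a detail the paper's proof leaves implicit, and is a welcome addition rather than a deviation.
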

\begin{proof}
First, note that if $n$ is not of the special form $n=q^m$ where $m\geq 1$, then we can always embed the $n$ qudits into a larger system of $q^m$ qudits with $m=\lceil \log_q(n) \rceil$, construct a scheme for the larger system, and ignore the additional qudits (as per Remark~\ref{remark:increase_n}). This increases $n$ by a factor of at most $q$ and therefore doesn't affect the statement of the theorem, i.e. we can without loss of generality assume that $n=q^m$ where $m\geq 1$. 

Now, we consider the code $C$ that is the \textit{dual} of a $k^\perp$-dimensional ${\rm BCH^{ext}}(\F_{q^m}/\F_q,D)$ code with designed distance $D=\loc+1$. Thus $C$ has length $n$, dual distance $\dist^\perp \geq D = \loc+1$, and, according to Theorem \ref{thm:bchbound}, dimension $k = n-k^\perp \leq m (D-2) +1 = m(\loc - 1) +1$. By Theorem~\ref{theorem:OA_codes}, this means that we can construct an $n \times N_{OA}$ orthogonal array of strength $\dist^\perp-1\geq\loc$ from this code, where $N_{OA}=q^{k} \leq q^{m (\loc-1) + 1} = q n^{\loc-1}$. According to Theorem \ref{theorem:BOA_exist}, the corresponding BOA has an overhead that scales at most logarithmically in $n$ since from Eq.~(\ref{eqn:N}) we obtain the following bound on the length of the bounded-strength decoupling scheme corresponding to the BOA: $N_{BOA}=q^{k} |\Sqk| = q^{k} 2ke \leq [qn^{\loc-1}] [2(m(\loc-1)+1)e] = 2qen^{\loc-1} [(\loc-1)\log_q n + 1] = O(n^{\loc-1} \log n)$. This establishes the claimed bound.  
\end{proof}

In physical systems, the locality $\loc$ is generally a small fixed number, so the requirement of $n \geq (\loc-1)^2$ is inconsequential asymptotically, while for small $n$, one can (by Remark~\ref{remark:increase_n}) always artificially increase $n$ to satisfy it.
Our main focus in Theorem \ref{thm:decouplinglocal} is on the asymptotic cost for fixed locality $\loc$ as the number $n$ of qudits grows. It should be noted that, depending on the particular choice of $q$, $\loc$, and $n$, further improvements over the bound in Theorem \ref{thm:decouplinglocal} are possible; see e.g., \cite{YD:2004,EB:2001}. This in turn leads to further improvements in the length of the decoupling schemes constructed via Theorem \ref{theorem:BOA_exist}. 
For instance, for $2$-local qubit Hamiltonians we saw at the end of Sec.~\ref{sec:BOA_exist} that Hamming codes can be used to construct BOA decoupling schemes of length $N = 2[3n + 1]\log_4[3n + 1]$, giving a slight improvement over schemes constructed from primitive BCH codes which lead to a scaling of $N \leq 8n[\log_4(n) + 1]$.

\section{Tables of best known BOA schemes for small systems}

In the following, we present a summary of the best known BOA schemes for qubit ($d=2$, $q=4$) and qutrit ($d=3$, $q=9$) systems for a variety of small localities $\loc$ and system sizes $n$. All schemes are obtained by our main construction in Theorem \ref{theorem:BOA_exist}, where the underlying classical linear codes are either taken from the literature or from the Magma \cite{Magma} database of best known linear codes which can be accessed using the Magma command \texttt{BestDimensionLinearCode(<field>, <length>, <distance>)}. 

Recall from Remark~\ref{remark:increase_n} that if we have a BOA decoupling scheme for $n$ qudits, it can also be used for smaller systems of $n'<n$ qudits. Therefore, the best known BOA for $n$ qudits is also the best known BOA for all $n'<n$ qudits unless a better BOA scheme for $n'$ is known. Table \ref{tab:gf4} summarizes the best known schemes for systems of $n$ qubits ($d=2$), for small values of $n$, that can be obtained from good linear codes. Similarly, Table \ref{tab:gf9} summarizes the best known schemes for systems of $n$ qutrits ($d=3$), for small values of $n$.

\begin{table}
\begin{tabular}{c@{\quad}@{\quad}c@{\qquad}c@{\qquad}c@{\qquad}c@{\qquad}c@{\qquad}c@{\qquad}c}
\hline
$\ell \; \backslash \; N$  &  
  $64$  & $384$& $2\,048$ & 10\,240        & 49\,152         	   &  229\,376      		   &  1\,048\,576		  			\\
\hline
2 & 2--$5^a$& 6--$21^a$& 22--$85^a$& 86--$341^a$	& 342--$1\,365^a$ & 1\,366--$5\,461^a$ & 5\,462--$21\,845^a$ \\
3 & - 	 & 3--$6^b$ & 7--$17^c$ & 18--$41^c$ 	& 42--$126^c$				   & 127--$288^c$				   & 289--$756^c$  				 \\
4 & - 	 & - 	 & 4--5 	  & 6--$11^d$ 		& 12--$21^e$ 				   & 22--43  				   & 44--85 					 \\
5 & - 	 & - 	 & - 	  & 5--6 			& 7--$12^f$  				   & 13--20  				   & 21--27 				\\
6 & - 	 & - 	 & - 	  & - 			& 6--7 	  				   & 8--9	 					& 10--17 			 \\
7 & - 	 & - 	 & - 	  & - 			& - 	  					& 7--8   	 					& 9--10      			 \\
8 & - 	 & - 	 & - 	  & - 			& - 	  					& - 		 				& 8--9 				\\
\hline
\end{tabular}
\caption{\label{tab:gf4} Table of the best known balanced-cycle orthogonal arrays (BOAs) for qubit ($d=2$) systems, indicating the number of qubits that can be decoupled by a BOA scheme for the given locality and length. Shown are the locality $\ell$ of the underlying Hamiltonian from $2$ up to $8$ and length $N=4^k 2k$ of the BOA cycles from $64$ up to $1\,048\,576$, corresponding to the values $k=2,\ldots,8$ in Eq.~(\ref{eqn:N}) with $q=4$ and $e=1$. Each entry in the table denotes the range of the number $n$ of qubits that can be achieved by a BOA scheme of the corresponding locality and length. For instance, the entry $7$--$17$ at location $(3,2\,048)$ indicates that in order to decouple a $3$-local Hamiltonian on a system with $n$ qubits, where $n \in \{7, \ldots, 17\}$, the best known BOA schemes have $2\,048$ time steps. If the number of qubits is one higher, e.\,g., $n=18$, then the currently best known BOA scheme would require $10\,240$ time steps. Superscripts indicate if the dual codes $[n,k^\perp,\dist^\perp]_4$ underlying the BOAs were obtained by a particular construction: 
a) all codes for $\ell=2$ were obtained from the Hamming code family $[n, n-k, 3]_4$ with $k = \log_4(3n+1)$; 
b) the code $[6,3,4]_4$ is the Hexacode \cite{MS:77}; 
c) the codes with parameters $[17,13 ,4]_4$,  $[41,36,4]_4$, $[126,120,4]_4$, $[288,281,4]_4$, and $[756,748,4]_4$ are based on caps in finite projective spaces which are sets of points of which no three are collinear, see \cite{EB:2001}; 
d) the code $[11,6,5]_4$ is a quadratic residue code, see \cite[5.13]{HSS99} and \cite{MS:77}; 
e) the code $[21,15,5]_4$ is the Kschischang-Pasupathy code, see \cite{KP:92}; and
f) the code $[12,6,6]_4$ is a quadratic residue code, see \cite[5.13]{HSS99} and \cite{MS:77}. All other codes in the table are based on the database of best known linear codes that is available in Magma \cite{Magma}.}
\end{table}

\begin{table}
\begin{tabular}{c@{\quad}@{\quad}c@{\qquad}c@{\qquad}c@{\qquad}c@{\qquad}c@{\qquad}c}
\hline
$\ell \; \backslash \; N$  &  $324$  &  $4\,374$  &  $52\,488$      &  
590\,490      &  6\,377\,292   &  66\,961\,566 \\
\hline
2 & 2--$10^a$& 11--$91^a$& 92--$820^a$& 821--$7\,381^a$ & 7\,382--$66\,430^a$ 		  & 66\,431--$597\,871^a$ \\
3 & - 	 & 3--$10^b$ & 11--$82^b$ & 83--$212^b$ & 213--$840^b$ & 841--$6\,723^b$ \\
4 & - 	 & - 	 & 4--10 	& 11--20  & 21--72   & 73--96  \\
5 & - 	 & - 	 & - 	    & 5--10  & 11--16   & 17--73  \\
6 & - 	 & - 	 & - 	  & - 		& 6--10 & 11--17	 \\
7 & - 	 & - 	 & - 	  & - 		& - 	  & 7--10  \\
\hline
\end{tabular}
\caption{\label{tab:gf9} Table of the best known balanced-cycle orthogonal arrays (BOAs) for qutrit ($d=3$) systems, indicating the number of qutrits that can be decoupled by a BOA scheme for the given locality and length. Shown are the locality $\ell$ of the underlying Hamiltonian from $2$ up to $7$ and length $N=9^k 2k$ of the BOA cycles from $324$ up to $66\,961\,566$, corresponding to the values $k=2,\ldots,7$ in Eq.~(\ref{eqn:N}) with $q=9$ and $e=1$. Each entry in the table denotes the range of the number $n$ of qutrits that can be achieved by a BOA scheme of the corresponding locality and length. Superscripts indicate if the dual codes $[n,k^\perp,\dist^\perp]_9$ underlying the BOAs were obtained by a particular construction: 
a) all codes for $\ell=2$ were obtained from the Hamming code family $[n, n-k, 3]_9$ with $k = \log_9(8n+1)$; and 
b) the codes with parameters $[10,7,4]_9$,  $[82,78,4]_9$, $[212,207,4]_9$, $[840,834,4]_9$, and $[6\,723,6\,716,4]_9$ are based on caps in finite projective spaces, see \cite{EB:2001}. All other codes in the table are based on the database of best known linear codes that is available in Magma \cite{Magma}.}
\end{table}

\section{Examples}\label{sec:examples}

\begin{example}[2-local decoupling of a diagonal Hamiltonian]\label{ex:pedagogical}
We first consider a simple case of decoupling a 2-local Hamiltonian on 7 qubits, where we assume (to simplify the example) that the Hamiltonian is diagonal, i.e. consists only of Pauli $Z$ operators. In this case, it turns out that we need not use an irreducible representation and can consequently use $q=d=2$ (instead of $q=d^2=4$); to avoid clutter, we defer the proof that this works to Example~\ref{ex:diagonal} where we will consider a similar situation. Because $q=2$, we use the group $\F_2 = \Z_2 = \lbrace{0,1}\rbrace$ and choose the representation
\[
\rho : \Z_2 \rightarrow \lbrace \id,X \rbrace, 
\quad
\text{ with }
\quad
\rho(0) = \id,
\quad
\rho(1) = X
\]
and corresponding control unitaries
\begin{equation}\label{eqn:ped_example_unitary}
u_{0}(\delta) = \id, 
\quad
u_{1}(\delta) = e^{-i X \delta},
\quad
\text{over time } \delta \in [0,\tfrac{\pi}{2}].
\end{equation}
Note that by evolving over time $\Delta = \tfrac{\pi}{2}$, we can therefore implement (up to phase) $u_0(\Delta) = \id = \rho(0)$ and $u_1(\Delta) = X = \rho(1)$. We assume that we can perform these control unitaries on any qubit.

With our constraints of 7 qubits ($n=7$) with 2-local interactions ($\loc=2$) and our ability to use $q=2$, we seek an $[n,k]_q = [7,k]_2$ code with dual distance $\dist^\perp=3 = \loc+1$ for some (hopefully small) dimension $k$. We find that there is a $[7,3]_2$ code with this desired dual distance, given by the generator matrix 
\[
G = \left[
\begin{array}{ccc}
1 &	0 &	0 \\
0 &	1 &	0 \\
1 &	1 &	0 \\
0 &	0 &	1 \\
1 &	0 &	1 \\
0 &	1 &	1 \\
1 &	1 &	1 \\
\end{array}
\right].
\]
Observe that this code has dimension $k=3$, which will dictate the efficiency of the protocol.

Although irrelevant for our concerns here, one may observe that, as guaranteed by Theorem~\ref{theorem:OA_codes}, an array built from the codewords of this code is an orthogonal array; indeed, the OA shown in Fig.~\ref{fig:OA} was constructed from this code.
We, on the other hand, wish to create a BOA from this code. As per Theorem~\ref{theorem:BOA_exist}, we start with the (additive) group $\F_q^k = \Z_2^3$ and choose the generating set $\Sqk = S_2^{(3)} = \left\lbrace {\MiniMatrix{1\\0\\0}}, {\MiniMatrix{0\\1\\0}}, {\MiniMatrix{0\\0\\1}} \right\rbrace$. We set $\calL$ to be the Eulerian cycle on the Cayley graph ${\small \Gamma\Big(\Z_2^3,S_2^{(3)}\Big)}$ shown in Fig.~\ref{fig:cycle_cube}, namely
\begin{align*}
\calL_{\F_2^3} &= \Bigg(\!
\MiniMatrix{0\\0\\0},\! \MiniMatrix{1\\0\\0},\! \MiniMatrix{1\\1\\0},\! \MiniMatrix{0\\1\\0},
\MiniMatrix{0\\0\\0},\! \MiniMatrix{0\\1\\0},\! \MiniMatrix{1\\1\\0},\! \MiniMatrix{1\\0\\0},
\MiniMatrix{0\\0\\0},\! \MiniMatrix{0\\0\\1},\! \MiniMatrix{1\\0\\1},\!  \\& \MiniMatrix{1\\0\\0}, 
\MiniMatrix{1\\0\\1},\! \MiniMatrix{1\\1\\1},\! \MiniMatrix{1\\1\\0},\! \MiniMatrix{1\\1\\1},
\MiniMatrix{0\\1\\1},\! \MiniMatrix{0\\1\\0},\! \MiniMatrix{0\\1\\1},\! \MiniMatrix{0\\0\\1},
\MiniMatrix{0\\1\\1},\! \MiniMatrix{1\\1\\1},\! \MiniMatrix{1\\0\\1},\! \MiniMatrix{0\\0\\1} 
\!\Bigg).
\end{align*}

We map this cycle under the action of the generator matrix $G$ to obtain the array $A_{G\calL} = [Gm]_{m\in\calL_{\F_2^3}}$, which is precisely the array that was shown in Fig.~\ref{fig:BOA_highlight}. According to Theorem~\ref{theorem:BOA_exist}, this is a $BOA(N,n,q,\loc) = BOA(24,7,2,2)$ with $N = q^k|S_2^{(3)}| =  2^3 \cdot 3 = 24$. By definition, this means that every $\loc\times N$ subarray (obtained by only considering a selection of just $\loc$ of the $n$ rows) defines a balanced cycle on the Cayley graph $\Gamma(\curG,\curS)$ of $\curG=\F_q^\loc = \Z_2^2$ with respect to some generating set $\curS$ (which may depend on the subarray). For example, look at rows 5 and 7, highlighted in Fig.~\ref{fig:BOA_highlight}. This defines the balanced cycle on $\Gamma\Big(\Z_2^2, \left\lbrace {\MiniMatrix{0\\1}}, {\MiniMatrix{1\\1}}\right\rbrace\Big)$ that was shown in Fig.~\ref{fig:balanced_cycle_hourglass}. The generating set and balanced cycle depend on the choice of rows, but by virtue of being a BOA, some balanced cycle will be obtained for any choice of 2 rows.

According to Theorem~\ref{theorem:BOA}, the protocol of Protocol~\ref{prot:BOA} defined by this BOA performs bounded-strength decoupling on our 7-qubit 2-local system. To construct this protocol, we consider the transitions between the columns of the BOA, defining the schedule shown in Fig.~\ref{fig:BOA_schedule}. 
\begin{figure}
	\begin{center}
		\includegraphics[width=5in]{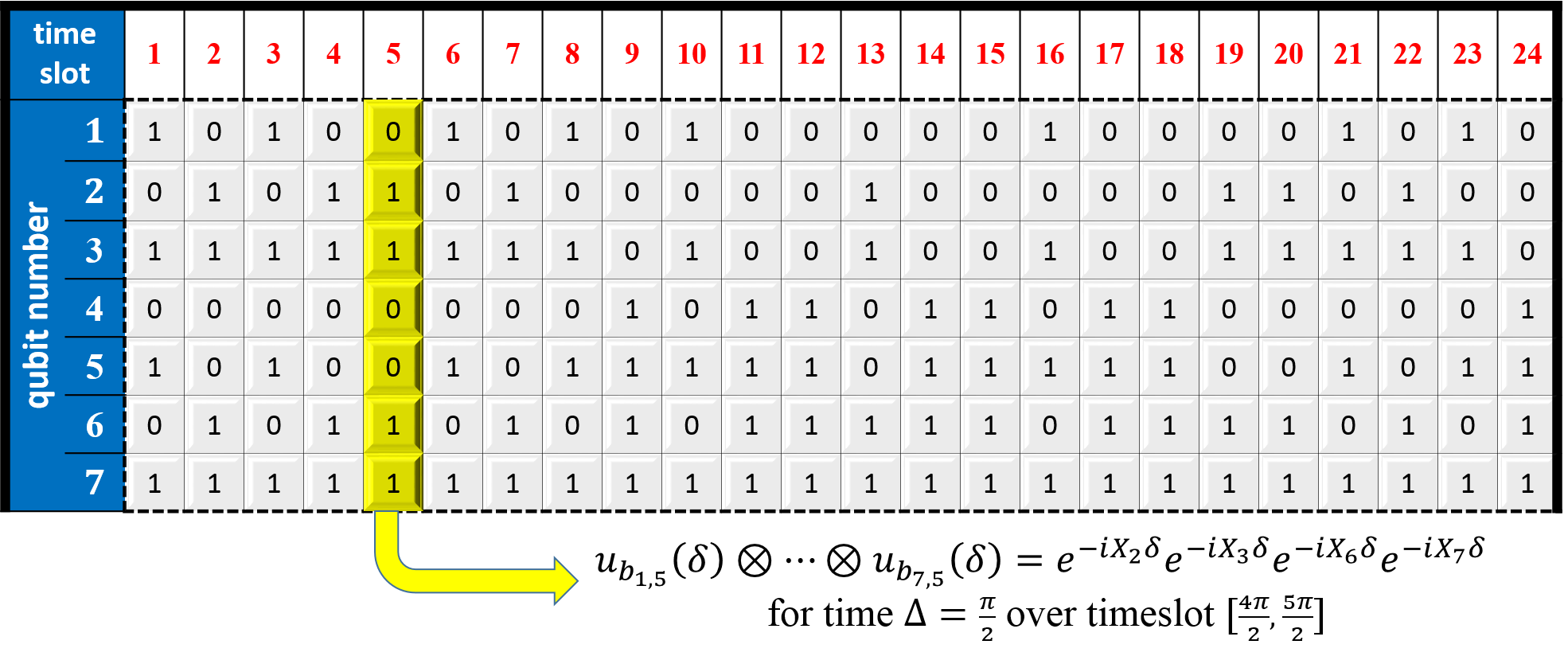}
		\caption[BOA decoupling schedule]
		{\label{fig:BOA_schedule}
		A $7\times 24$ array defining the decoupling protocol in Example~\ref{ex:pedagogical} in the format of Fig.~\ref{fig:array}. Rows correspond to qubit numbers, columns correspond to time slots (each of width $\Delta=\tfrac{\pi}{2}$), and entries correspond to unitary operators on qubits according to Eq.~\ref{eqn:ped_example_unitary}. 
As per Protocol~\ref{prot:BOA}, the control cycle evolution is $U_c\Big((j-1)\Delta+\delta\Big) = u_{\vec b_j}(\delta) U_c\Big((j-1)\Delta\Big)$, $\delta\in[0,\Delta]$, where $\vec b_j$ is the $j$th  column.		
For example, because $\vec b_5 = (0,1,1,0,0,1,1)^T$, we have 
$u_{\vec b_5}(\delta)
= u_0 \otimes u_1 \otimes u_1 \otimes u_0 \otimes u_0 \otimes u_1 \otimes u_1 \, (\delta) = e^{-iX_2\delta}e^{-iX_3\delta}e^{-iX_6\delta}e^{-iX_7\delta}$.
		}
	\end{center}
\end{figure}
The control unitaries to be applied are defined by these transitions and our choice of Eq.~\ref{eqn:ped_example_unitary}, which was chosen to be consistent with our representation $\rho$. For example, in time slot 5, the transition column is ${\vec b_5} = (0,1,1,0,0,1,1)^T$, which corresponds to the unitary  
\[
u_{\vec b_5}(\delta)= e^{-iX_2\delta}e^{-iX_3\delta}e^{-iX_6\delta}e^{-iX_7\delta}
\]
where $X_i$ denotes the Pauli $X$ operator on the $i$th qubit.
As per Protocol~\ref{prot:BOA}, the control cycle evolution is ${U_c\Big((j-1)\Delta+\delta\Big)} = {u_{\vec b_j}(\delta) U_c\Big((j-1)\Delta\Big)}$, $\delta\in[0,\Delta]$, where $\vec b_j$ is the $j$th  column in Fig.~\ref{fig:BOA_schedule}.		
This protocol will decouple any 2-local 7-qubit diagonal Hamiltonian.

We epmhasize that in this simple diagonal-Hamiltonian example, we were able to use $q=d=2$ (for reasons that will be addressed in Example~\ref{ex:diagonal}). If the Hamiltonian were not known to be diagonal, this would not in general have been possible, and we would have needed to instead use a $[7,k]_q$ code for $q=4$.

\end{example}

\begin{example}[2-local decoupling using a Hamming code]
Consider an arbitrary $2$-local Hamiltonian $H$ on a system of $5$ qubits. Then $H$ can be decoupled by applying a BOA derived from the code dual to a $[5,3,3]_4$ Hamming code, namely the $[5,2]_4$ code over $\F_4$ with the generator matrix 
\[
G = \left[
\begin{array}{cc}
	1 & 0 \\
	0 & 1 \\
	1 & \alpha^2 \\
	\alpha^2 & \alpha^2 \\
	\alpha^2 & 1\\
\end{array}
\right],
\]
where $\alpha$ is a primitive element of order $3$ of $\F_4$. 
Note that we arrange the code words as {\em column} vectors, consistent with the notation used throughout this paper and some -- but not all -- of the literature. Since here $k=2$, $d=2$, and $e=1$, the corresponding BOA has a total number of time steps given by $N=d^{2k} 2ke=64$. When arranged into the columns of a $5 \times 64$ matrix, each of the $64$ control Hamiltonians that are applied in this scheme corresponds to one of the $16$ code words of the $[5,2]_4$ code.  
\end{example}

\newcommand\nix[1]{{}}

\nix{
\begin{example}
Consider an arbitrary $4$-local Hamiltonian $H$ on a system of $16$ qubits. Then $H$ can be decoupled by applying a BOA derived from the dual code of a ${\rm BCH^{ext}}(\F_{16}/\F_4,5) = [16,9,5]_4$, i.e. from a code over $\F_4$ with parameters $[16,7]_4$ and generator matrix
\[
G = \left[
\begin{array}{cccccccccccccccc}
1& 0& 0& 0& 0& 0& 0& 1& \alpha^2& 1& 1& \alpha& \alpha^2& \alpha^2& \alpha& \alpha^2\\
0& 1& 0& 0& 0& 0& 0& \alpha& 0& 1& \alpha^2& \alpha& \alpha^2& \alpha& 0& \alpha \\
0& 0& 1& 0& 0& 0& 0& 0& \alpha& 0& 1& \alpha^2& \alpha& \alpha^2& \alpha& \alpha \\
0& 0& 0& 1& 0& 0& 0& \alpha& 1& 0& \alpha& \alpha& \alpha& \alpha^2& 0& \alpha^2 \\
0& 0& 0& 0& 1& 0& 0& 0& \alpha& 1& 0& \alpha& \alpha& \alpha& \alpha^2& \alpha^2 \\
0& 0& 0& 0& 0& 1& 0& \alpha^2& \alpha& 1& \alpha& 1& 0& 0& \alpha^2& 1 \\
0& 0& 0& 0& 0& 0& 1& \alpha^2& 1& 1& \alpha& \alpha^2& \alpha^2& \alpha& 1& \alpha^2
\end{array}
\right]^T,
\]
where $\alpha$ is a primitive element (of order $(16-1)/(4-1)=5$) for the extension $\F_{16}/\F_4$. Since $k=7$, $d=2$, and $e=1$, the corresponding BOA has a total number of time steps given by $N=d^{2k} 2ke=229,376$. When arranged into the columns of an $16 \times 229,376$ matrix, each of the $229,376$ control Hamiltonians that are applied in this scheme corresponds to one of the $16,384$ code words of the $[16,7]_4$ code. 
\end{example}
}

\begin{example}[5-local decoupling of a diagonal Hamiltonian using a BCH code]\label{ex:diagonal}
Recall from Remark~\ref{remark:reducible} that if one is interested in decoupling a Hamiltonian of a particular form, it may not be necessary for $\rho$ to be irreducible, and in such a case it may be possible to choose a code over a field $\F_q$ for which $q$ is less than $d^2$.
Consider a diagonal (i.e., $Z$-only) $5$-local Hamiltonian $H$ on a system of $16$ qubits. Then $H$ can be decoupled by applying a BOA derived from the dual code of a ${\rm BCH^{ext}}(\F_2^4/\F_2,6) = [16,7,6]_2$, i.e. from a code over $\F_2$ with parameters $[16,9]_2$ and generator matrix
\[
G = \left[
\begin{array}{ccccccccc}
1&	0&	0&	0&	0&	0&	0&	0&	0	\\
0&	1&	0&	0&	0&	0&	0&	0&	0	\\
0&	0&	1&	0&	0&	0&	0&	0&	0	\\
0&	0&	0&	1&	0&	0&	0&	0&	0	\\
0&	0&	0&	0&	1&	0&	0&	0&	0	\\
0&	0&	0&	0&	0&	1&	0&	0&	0	\\
0&	0&	0&	0&	0&	0&	1&	0&	0	\\
0&	0&	0&	0&	0&	0&	0&	1&	0	\\
0&	0&	0&	0&	0&	0&	0&	0&	1	\\
1&	1&	0&	0&	1&	1&	1&	0&	0	\\
0&	1&	1&	0&	0&	1&	1&	1&	0	\\
0&	0&	1&	1&	0&	0&	1&	1&	1	\\
1&	1&	0&	1&	0&	1&	1&	1&	1	\\
1&	0&	1&	0&	0&	1&	0&	1&	1	\\
1&	0&	0&	1&	1&	1&	0&	0&	1	\\
1&	0&	0&	0&	1&	0&	1&	1&	1	\\
\end{array}
\right].
\]
Due to the special structure of the Hamiltonian, we are able to choose $q=d=2$ (rather than $q=d^2=4$) in this case.
Since $k=9$, $d=2$, $e=1$, and the Hamiltonian is $Z$ only, the corresponding BOA has a total number of time steps given by $N=d^{k} ke=4\,608$. When arranged into the columns of an $16 \times 4\,608$ matrix, each of the $4\,608$ control Hamiltonians that are applied in this scheme corresponds to one of the $512$ code words of the $[16,9]_2$ code.  

To construct our protocol from this code we first choose a generating set $S_{2}^{(9)}$ for $\F_2^9$, such as the $k=9$ standard basis vectors $\lbrace (1,0,0,\ldots)^T, (0,1,0,0,\ldots)^T,\ldots \rbrace$. We then find an Eulerian cycle $\calL$ on the Cayley graph $\Gamma(\F_2^9,S_{2}^{(9)})$ and map it to an Eulerian cycle $G\calL$ on the Cayley graph $\Gamma(G[\F_2^9],G[S_{2}^{(9)}])$ using the generator matrix above. Our choice of $S_{2}^{(9)}$ as being the standard basis vectors would dictate that the transition labels $\vec b = Gs$, for $s\in S_{2}^{(9)}$, are simply the columns of $G$. Our BOA consists of the $2^9=512$ 16-bit code words, each appearing exactly 9 times according to the order specified by $G\calL$. To use the BOA as a decoupling scheme, we may choose 
\[
\rho : \lbrace 0,1 \rbrace \rightarrow \lbrace \id,X \rbrace, 
\quad
\text{ with }
\quad
\rho(0) = \id,
\quad
\rho(1) = X
\]
and choose the corresponding single-qubit control unitaries to be
\begin{equation*}
u_{0}(\delta) = \id, 
\quad
u_{1}(\delta) = e^{-i X \delta} ,
\quad
\text{over time } \delta \in [0,\tfrac{\pi}{2}].
\end{equation*}
Observe that (ignoring global phase) $u_0(\tfrac{\pi}{2}) = \id = \rho(0)$ and $u_1(\tfrac{\pi}{2}) = X = \rho(1)$. 
The multi-qubit control unitaries are defined by $u_{\vec b} = u_{b_1}\otimes\cdots\otimes u_{b_{16}}$ (for ${\vec b} \in \lbrace{0,1\rbrace}^{16}$).
For example, if $s=(1,0,0,\ldots)^T$, then $\vec b = Gs$ is the first column of $G$ and $u_{\vec b} = e^{-i X_1 \delta}e^{-i X_{10} \delta}e^{-i X_{13} \delta}e^{-i X_{14} \delta}e^{-i X_{15} \delta}e^{-i X_{16} \delta}$ acting non-trivially on qubits 1, 10, 13, 14, 15, and 16.
The control scheme in Protocol~\ref{prot:BOA} is thus specified.

We now prove that this example works, even though $\rho$ is reducible (i.e. even though we are choosing $q=2$ rather than $q=4$). As per the argument in the proof of Theorem~\ref{theorem:BOA}, we need only focus on a single 5-local term of $H$ (so assume without loss of generality that $H$ consists of only one such term), we can ignore all but the 5 qubits on which it acts non-trivially, and we need only speak of the 5-qubit unitaries 
$u_\curs(\delta)$ 
that act on those qubits and correspond to $\curs \in \curS$ (where $\curS$ is the generator set for $\F_2^5$ derived from the BOA for those 5 qubits). 
According to the proof of Theorem~\ref{theorem:inefficient-bounded-balanced}, our scheme works if and only if $\Pi_\curG\Big(F_\curS(H)\Big)=0$. Here, however, $\rho$ is not irreducible, so $\Pi_\curG$ will not suppress all traceless operators; indeed, $X$-only operators commute with each $U_\curg$ and are therefore unmodified by $\Pi_\curG$. 

To show that $\Pi_\curG\Big(F_\curS(H)\Big)=0$ nevertheless holds, observe from Eq.~(\ref{eqn:F}) that each term in $F_\curS(H)$ is of the form
$u_\curs^\dag H u_\curs$.
Now, $H$ is diagonal, i.e. a tensor product of only $\id$ and $Z$, and $u_\curs(\delta)$ is a tensor product of only $\id$ and $e^{-iX\delta}$. Therefore, because 
$e^{iX\delta} Z e^{-iX\delta} = \cos(2\delta) Z + \sin(2\delta) Y$,
we see that $u_\curs^\dag H u_\curs$ can be expanded as a sum of tensor products of $\id$, $Z$ and $Y$. 
Moreover, because $H$ is traceless and conjugation by a unitary is trace-preserving, this sum cannot contain a term proportional to the identity, $\id^{\otimes 5}$. Thus, each of these terms consists of at least one operator (which for notational purposes we take to be on the first qubit) that is a $Z$ or a $Y$, i.e. each can be written in the form $\sigma \otimes A$, where $\sigma \in \lbrace Y,Z \rbrace$ and $A$ is some 4-fold tensor product of operators from $\lbrace \id,Y,Z \rbrace$. 
Our protocol is defined by a BOA of strength 5, so any subset of 5 rows of the BOA consists of all $2^5$ 5-tuples in $\F_2^5$ repeated an equal number of times.
Thus the sum in $\Pi_\curG$ involves conjugating by each $U_\curg$ where $U_\curg$ ranges over all $2^5$ possible tensor products that can be formed on 5 qubits using $\id$ and $X$. Focusing on the first qubit, we can equivalently say that the $U_\curg$ range over all possible $\id \otimes B$ and $X \otimes B$, where $B$ ranges over $\lbrace B_2 \otimes B_3 \otimes B_4 \otimes B_5 : B_i \in \lbrace \id,X \rbrace \rbrace$.
Conjugating $\sigma \otimes A$ by $\id \otimes B$ yields either $\sigma \otimes A$ or $- \sigma \otimes A$, whereas conjugating instead by $X \otimes B$ yields the same result but with the opposite sign (since $\sigma \in \lbrace Y,Z \rbrace$). In other words, 
$
(\id \otimes B) (\sigma \otimes A) (\id \otimes B)
+
(X \otimes B) (\sigma \otimes A) (X \otimes B)
= 
0.
$
Thus, the sum in $\Pi_\curG$ cancels in pairs, i.e. $\Pi_\curG\Big(F_\curS(H)\Big)$ is indeed 0.
\end{example}

\section{Conclusion}
We have shown how to use bounded-strength controls to decouple $n$ interacting qudits of dimension $d=p^e$ (for some prime $p$ and positive integer $e$) that obey an $\loc$-local Hamiltonian. 
The system may be either closed or open (i.e. coupled to an environment), as long as both the system Hamiltonian and the environmental couplings are $\loc$-local on the system.
The decoupling scheme is described using a balanced-cycle orthogonal array, which we introduced and showed how to construct from classical linear codes.
To determine the best possible scheme based on our method, we have to find the best linear error-correcting code $C^\perp=[n,k^\perp]_q$ of length $n$ and distance at least $\loc+1$.  By \textit{the best}, we mean $k^\perp$ should be maximized for the given system size ($n$) and locality ($\loc$). 
The construction in the present paper yields a decoupling scheme that uses $N= d^{2k} 2ke$ time slices (of fixed length) where $k=n-k^\perp$.

Finding the best code is a key problem in the theory of error-correcting codes; extensive code tables have been compiled for small distances. For the important case of qubits with 2-local interactions, for example, one can use Hamming codes over $\F_4$ such that $k = \log_4(3n+1)$, whence $N$ scales like $n\log n$. For higher degrees of locality, we can use families of BCH codes to construct the decoupling schemes. The designed distance of these codes is chosen based on the locality $\loc$ of the Hamiltonian, leading to a scaling of $N$ as $n^{\loc -1} \log n$. 
An open question is whether the schemes so derived are optimal in the asymptotic sense, i.e. whether, for fixed $\loc$ and qudit dimension $d$, a better scaling with $n$ is possible. We note that it is known \cite{JWB02} that when using bang-bang pulses, time at least $\Omega(n)$ is necessary to decouple general 2-body Hamiltonians, whereas our bounded-strength scheme takes time $O(n \log n)$ using Hamming codes for such Hamiltonians.
Another interesting open question is to develop a theory for systems with mixed qudit dimensions. All schemes derived here are decoupling schemes up to first order, and while it is easy to extend this to second order using symmetry, it would be interesting to find schemes that also achieve decoupling to higher orders. Finally, we mention as an avenue for future research the application of the derived bounded-strength decoupling schemes for the purpose of Hamiltonian simulation.  

\subsection*{Acknowledgments}
This work was supported in part by the ARO grant Contract Number W911NF-12-0486, as well as 
by the National Science Foundation Science and Technology Center for Science of Information under grant CCF-0939370.  
P.W. gratefully acknowledges the support from the NSF CAREER Award CCF-0746600. We thank Madhu Sudan and Sergey Yekhanin for discussions on BCH codes, in particular regarding the proof of the bound in Theorem \ref{thm:bchbound}.

%
%

\bibliographystyle{adam}
\bibliography{BOArefs}

\end{document}